\documentclass[12pt]{article}
\pdfoutput=1
\usepackage{algorithm}
\usepackage{algpseudocode}
\usepackage{array}
\usepackage[caption=false,font=normalsize,labelfont=sf,textfont=sf]{subfig}
\usepackage{textcomp}
\usepackage{stfloats}
\usepackage{url}
\usepackage[english]{babel}
\usepackage{graphicx}
\usepackage{framed}
\usepackage[normalem]{ulem}
\usepackage{amsmath}
\usepackage{amsthm}
\usepackage{amssymb}
\usepackage{mathtools}
\usepackage{amsfonts}
\usepackage{xfrac}
\usepackage{enumerate}
\usepackage{comment}
\usepackage[utf8]{inputenc}
\usepackage[top=1 in,bottom=1in, left=1 in, right=1 in]{geometry}

\usepackage{hyperref} 
\usepackage{cleveref}
\usepackage{verbatim} 
\usepackage{comment}

\usepackage{xcolor}

\usepackage{tikz}
\usepackage{pgf}
\usetikzlibrary{shapes.geometric, arrows, mindmap,shapes,positioning,backgrounds,decorations, matrix}
\usepackage{pgfplots}
\pgfplotsset{width=10cm,compat=1.9}
\usepgfplotslibrary{colormaps,fillbetween}

\hyphenation{Esmaeilza-deh}

\theoremstyle{definition}
\newtheorem{Definition}{Definition}[section]

\theoremstyle{plain}
\newtheorem{Theorem}[Definition]{Theorem}

\newtheorem{Lemma}[Definition]{Lemma}

\theoremstyle{remark}
\newtheorem{Remark}[Definition]{Remark}

\numberwithin{equation}{section}

\usepackage[autostyle]{csquotes}
\usepackage[style=ieee,sorting=nyt]{biblatex}
\addbibresource{main.bib}

\usepackage{authblk}

\DeclareMathOperator*{\dom}{dom}

\title{Inverse Problems are Solvable on Real Number Signal Processing Hardware}

\author[1,2,3]{Holger Boche}
\author[4]{Adalbert Fono}
\author[4,5,6]{Gitta Kutyniok}
\affil[1]{Institute of Theoretical Information Technology, TUM School of Computation, Information and Technology, Technical University of Munich, Germany}
\affil[2]{Munich Center for Quantum Science and Technology (MCQST), Munich, Germany}
\affil[3]{CASA – Cyber Security in the Age of Large-Scale Adversaries– Exzellenzcluster, Ruhr-Universität Bochum, Germany}
\affil[4]{Department of Mathematics, Ludwig-Maximilians-Universität München, Germany}
\affil[5]{Department of Physics and Technology, University of Tromsø, Norway}
\affil[6]{Munich Center for Machine Learning (MCML), Munich, Germany}
\date{}

\newcommand{\R}{\mathbb{R}}
\newcommand{\Q}{\mathbb{Q}}

\newcommand{\N}{\mathbb{N}}
\newcommand{\Z}{\mathbb{Z}}

\newcommand{\C}{\mathbb{C}}

\newcommand{\abs}[1]{\ensuremath{\left\lvert#1\right\rvert}}
\newcommand{\norm}[2][]{\ensuremath{\left\lVert#2\right\rVert_{#1}}}

\DeclareMathOperator*{\argmin}{arg\,min}

\begin{document}

\maketitle

\begin{abstract}
Despite the success of Deep Learning (DL) serious reliability issues such as non-robustness persist. An interesting aspect is, whether these problems arise due to insufficient tools or due to fundamental limitations of DL. We study this question from the computability perspective, i.e., we characterize the limits imposed by the applied hardware. For this, we focus on the class of inverse problems, which, in particular, encompasses any task to reconstruct data from measurements. On digital hardware, a conceptual barrier on the capabilities of DL for solving finite-dimensional inverse problems has in fact already been derived. This paper investigates the general computation framework of Blum-Shub-Smale (BSS) machines which allows the processing and storage of arbitrary real values. Although a corresponding real world computing device does not exist at the moment, research and development towards real number computing hardware, usually referred to by the term “neuromorphic computing”, has increased in recent years. In this work, we show that the framework of BSS machines does enable the algorithmic solvability of finite dimensional inverse problems. Our results emphasize the influence of the considered computing model in questions of accuracy and reliability. 
\end{abstract}

\textbf{Keywords}: inverse problems, deep learning, computing theory, BSS machines 

\section{Introduction}
We study the field of inverse problems in imaging sciences from the perspective of algorithmic solvability respectively constructability of solutions. Image reconstruction from measurements is an important task in a wide range of scientific, industrial, and medical applications such as electron microscopy, seismic imaging, magnetic resonance imaging (MRI) and X-Ray computed tomography (CT). Various methods to solve inverse problems have been introduced, ranging from sparse regularization techniques \cite{Daubechies04SparseReg, Wright2009SparseRec, Cotter2005SparseSol, Selesnick2017SRviaCA} including compressed sensing \cite{Candes05DecLP, Candes06RobUnc, Candes06UnivEncStrat, Donoho06CompSens, Ji2008BayesianCS, Duarte2011StructuredCS, Elad2007ProjCS} to deep learning (DL) techniques \cite{Zu18AutoMap, Schlemper18CNNvsCS, Arridge2019SolvingIP, Bubba19Shearlet, Hammernik18MRIDL2, Borgerding2017AMP}.

In particular, DL is considered the predominant approach to tackle inverse problems nowadays. DL systems learn how to best perform the reconstruction by optimizing the quality of the reconstruction based on previous data. Although this process has been applied in various fields such as image classification \cite{He2015DelvingDI}, playing board games \cite{Silver16Go}, natural language processing \cite{Brown20GPT3}, and protein folding prediction \cite{Senior20DeepFold} with great success, there also do exist certain drawbacks of this approach, most prominently, the lack of reliability of DL systems \cite{Antun2020InstabilitiesDL}. In a larger context, lack of reliability refers to problems regarding safety, security, privacy and responsibility of DL systems \cite{Mireshghallah20PrivacySurvey, BOULEMTAFES20Privacy, Liu21Privacy, Willers20Safety, He2022Security}. Our main interest lies within the reliability and correctness of DL approaches. A typical counterexample is given by the instability towards adversarial examples, i.e., the fact that DL methods can be easily mislead through minor input perturbations constructed by an adversary \cite{Szegedy14AdvEx, Carlini18AudioAdvEx, Tsipras18RobustnessOdds}. Despite proposed methods to alleviate this instability phenomenon \cite{Papernot16Distillation, Madry18AdvTraining}, a full understanding is still missing \cite{Ilyas19AdvExNotBugs}. 

These disadvantages may be tolerable in static circumstances but in highly dynamic, safety-critical, and autonomous applications such as autonomous driving they pose a serious risk \cite{Liu2020ComputingSF, Muhammad2021AutonomVehicle2}, since there is in general no tolerance for error. Thus, enhancing reliability of DL is crucial to overcome the drawbacks. A way out is given by integrating a human observer into the DL pipeline \cite{WU22Hitl}. However, this may not be feasible in certain applications such as autonomous driving and also severely limits the autonomy of DL methods.

Another approach consists in certifying the correctness of DL systems for the desired applications \cite{Mirman21Certification, Biondi20Certification, Katz17Certification, Zhang20Certification, Salman19Certification}. Although this presents a huge challenge, it may point towards a solution. Typically, DL systems operate with no guarantees concerning the accuracy and correctness of their output. Hence, incorporating certificates with respect to accuracy and correctness is a step towards reliable DL. However, implementing this process in a DL pipeline is far from trivial, even more it is not clear if it is possible at all. A first step is to assess whether and under which circumstances a solution to a given problem can be reliably obtained by an algorithmic computation. If the answer is positive, then in the second step the possibility of implementing certificates can be evaluated. On the other hand, if the answer is negative, then also the certificate-based approach and thereby a reliable DL approach in this strict sense can not be expected. In this way, we are able to distinguish between inherent limitations of DL itself and fundamental limitations of any algorithm tackling a given problem. 

Specifically, we study whether the solution of inverse problems --- a task often tackled by DL ---  can in principle be computed. For this purpose, we apply the notion of algorithmic solvability, which describes an abstract framework for accurate computations on a given computing device. We find that the algorithmic solvability of inverse problems depends on the utilized computing device; on digital hardware algorithmic solvability can not be achieved, whereas on analog hardware it is potentially feasible. Thus, obtaining reliable DL tackling inverse problems may ultimately depend on the computing platform. 

\subsection{Algorithmic Solvability}
\noindent
Given a mathematical formulation of a specific problem, an immanent question is whether a solution exists and, upon existence, how to obtain it. Naturally, the second task depends on the feasible algorithmic operations which are prescribed by the properties of the utilized computing device and its hardware. Moreover, the constructive solution may not be exact but only approximate. For instance, on digital computers real numbers can in general only be represented approximately by rational numbers (since displaying an irrational number would require an infinite amount of information). Consequently, we can not expect exact solutions of real problems on digital hardware. At the same time, an approximate solution can be accepted if guarantees regarding accuracy, convergence, and worst-case error are satisfied. 

It is important to realize that existence and construction of solutions are separate questions under certain circumstances. Although the existence of a solution can be proved by explicitly constructing it, equally valid is an approach that deduces the existence logically without exactly specifying the solution. Consider, for example, the well-known Bolzano-Weierstrass theorem which states that every bounded sequence in $\R^n$ has a convergent subsequence. Based on this result, we can assert the existence of a convergent subsequence for any bounded sequence in $\R^n$; however, at the same time we may not be able to write down the convergent sequence explicitly.

In practice, typically a constructive solution is required. In general, we aim to implement the approach on a computer so that the calculations can be performed autonomously. More formally, we have the following dependencies and connections: Given a task or problem expressed in some formal language, which defines the premises of the underlying system, an algorithm is a set of instructions that solves the posed problem under these premises. The algorithm can only operate within clearly defined boundaries which exactly describe its admissible steps. Often, an algorithm is intended to run on a hardware platform which also can only perform clearly prescribed operations. Hence, for an algorithmic solution on a given hardware platform the admissible operations of the algorithm and the hardware need to coincide. In other words, the capabilities of the utilized computing device dictate the admissible operations of the (sought) algorithm. Therefore, for practical applications the intended hardware clearly defines the capabilities and limitations of any suitable algorithm. 

Hence, it is crucial to be aware of the limitations of a given hardware platform concerning computing capabilities. In this manner, we can evaluate whether reliable DL is in principle possible. If an algorithmic solution of a specific problem is feasible on a given hardware, then reliable DL is potentially achievable on this hardware platform. 

Another related topic is the complexity of an algorithm. Here, not only the mere existence of an algorithm is studied but the efficiency (measured in the number of computation steps) of algorithms. Then, one can classify algorithms according to their complexity and decide which suffice practical demands with respect to computation time, memory requirements, etc. However, the question of complexity and practical applicability is posed only after the algorithmic solvability is established.

\subsection{Analog and Digital Computing Platforms}\label{seq:AnaDigComp}
\noindent
The most prominent and universally applied computing and signal processing platform is digital hardware. Turing machines represent an abstract concept of digital machines and thereby allow to deduce limitations of digital computing devices \cite{Turing36Entscheidung}. Consequently, they provide a means to study the question whether or not it is in principle possible to compute a solution of a task algorithmically. 
Problems, which are not computable by a Turing machine, cannot be solved or even algorithmically approximated in a controlled way by any current or future digital computer architecture. One of the key feature of Turing machines as well as of digital computers is their discrete nature, i.e., Turing machines operate on a discrete domain and enable the execution of discrete algorithms. In contrast, many real-world problems are of a continuum nature, i.e., their input and output quantities are described by continuous variables. Hence, these quantities can only be approximated but not represented exactly on Turing machines. This dichotomy lies at the core of the non-computability of many tasks.

More precisely, Turing machines are by construction limited to work with computable numbers. The concept of computability was first introduced by Borel in 1912 \cite{Borel1912Comp}. On this basis, Turing refined the concept by linking it to an abstract computing device --- the Turing machine --- and established notions like computable numbers and computable functions on computable real numbers \cite{Turing36Entscheidung}. Computable numbers are real numbers that are computable by a Turing machine in the following sense. The result of computations by a Turing machine is by construction always a rational number (since Turing machines need to stop their calculations after a finite amount of steps). Therefore, computable real numbers --- which constitute a countable, proper subset of the real numbers --- are those that can be effectively approximated by computable sequences of rational numbers, i.e., by Turing machines.

In the framework of Turing machines the algorithmic solvability of finite-dimensional inverse problems was studied in \cite{Boche2022LimitsDL}. It was found that any method which runs on digital hardware is subject to certain boundaries when approximating the solution maps of inverse problems. Hence, this result establishes a fundamental barrier on digital computing devices that also effects DL implemented on digital hardware. This barrier was obtained by using the mathematical structure and properties of finite-dimensional inverse problems as well as the mathematical structure and properties of Turing machines and thereby also of digital machines.

An interesting question is whether this constraint is connected to the properties of Turing machines, in particular, the approximation of computable real numbers by rational numbers, or, does this constraint arise from inherent characteristics of inverse problems? In other words, is the result problem-specific or connected to digital hardware? Therefore, it is of interest to study how powerful the signal processing unit must be to enable the algorithmic solution of inverse problems. In the following, we consider and analyse inverse problems under a more general (analog) computation model that is based on exact real number calculations, i.e., arbitrary real numbers can be processed and stored. Thus, such a model is not suitable for implementation on digital hardware and it is even unclear if such a computing device can be realized by means of today's (hardware) technology. 

One might argue that treating computing models beyond the standard digital hardware is purely of theoretical interest since such hardware platforms can not be obtained with current (or even possibly future) manufacturing capabilities. However, in recent years novel approaches such as neuromorphic computing and signal processing systems have been proposed \cite{Christensen2022NCSurvey}. Regarding a practical development and implementation of electronic neuromorphic hardware platforms, there has been made intriguing progress in industrial research, among others by IBM \cite{WebsiteIBM}, Intel \cite{WebsiteIntel} and Samsung \cite{Ham2021SamsungNMC}. Neuromorphic systems are inspired by biological neural networks. They can be regarded as a combination of digital and analog computation, which is achieved by incorporating real values in form of electrical values, like current and voltages, instead of solely relying on binary numbers. 

Apart from incorporating analog processing capabilities 
neuromorphic hardware offers further possible advantages. The expected energy savings from deploying artificial intelligence (AI) applications (as deep learning) on neuromorphic computing and signal processing hardware, instead of applying digital implementations, are tremendous \cite{Esser15Neuromorphic, Smith22Neuromorphic, Maas22Neuromorphic, Markovic20Neuromorphic, Blouw20Neuromorphic}. Consequently, neuromorphic computing and signal processing platforms are considered as the forthcoming premium solution for implementing AI applications \cite{Papp2021NanoscaleNN}. Furthermore, neuromorphic computing and signal processing offers by design advantages for emerging concepts like "in-memory computing" compared to digital computing and signal processing platforms \cite{Boybat21InMemory,Karunaratne20InMemory, Sebastian20InMemory,Payvand19InMemory}. In particular, these concepts can lead to further substantial energy savings in neuromorphic approaches. Due to these reasons, extensive scientific and engineering research is carried out in industry to develop neuromorphic hardware platforms which already led to the aforementioned impressive result.

Another line of research focuses on biocomputing where living cells act as computing and signal processing platforms which allow analog computations for human-defined operations \cite{Grozinger2019Biocomp}. In recent years, tremendous progress has been made in understanding and implementing analog computing features in biological systems \cite{Wagenbauer2017DNAassembly, Poirazi2017Dendritic, Wright2022DeepPhysycalNN}, however, still many challenges lie ahead. Nevertheless, these examples show that analog information processing hardware, e.g., based on neuromorphic computing processors, may ultimately be within reach.

Lastly, we also want to mention the progress in quantum simulations which provide yet another approach to analog computations; see \cite{Bloch22Quantum, Bloch22Quantum2} and the references therein. In contrast to digital computers via Turing machines, there does not exist a general mathematical concept to describe analog computations universally. The Blum-Shub-Smale (BSS) machine \cite{Blum89BSSmachines} is a suitable model to analyse the limits of analog computations; potentially, BSS machines yield a mathematical description of a universal analog computer based on exact real number computation. It was conjectured in \cite{Grozinger2019Biocomp} that the mathematical model of BSS machines provides an (abstract) description of biocomputing and neuromorphic systems. However, it is not clear whether and to what degree a realization is feasible in practice. Thus, the BSS model is not apt to evaluate the capabilities of current analog hardware. On the other hand, the BSS model enables us to investigate whether idealized analog, i.e., exact real number processing, hardware allows for the algorithmic solution of a given problem. If a problem is not algorithmically solvable on a BSS machine, then the problem is deemed inherently hard and it is very unlikely that it can be solved on any (future) analog hardware device. 

Therefore, an interesting question is whether the same limitations concerning inverse problems as in the Turing model also arise in the BSS model. Intriguingly, we show that the situation is quite different under the BSS model. In fact, here the opposite is true, namely, the solution maps of inverse problems can under certain conditions be computed without restrictions. Hence, on advanced neuromorphic computing devices the fundamental boundaries, which arise on digital hardware, may not exist. This shows that the algorithmic solvability of inverse problems is inherently connected to the computing platform.

\subsection{Related Work}
\noindent
First, we want to point out that the capabilities of BSS machines compared to Turing machines have already been studied for tasks such as denial-of-service attacks \cite{Boche2021DoSAttacks, Boche2021DetectDoS} and remote state estimation \cite{Boche2022RemoteState}. 

For the special case of inverse problems, we already mentioned the non-computability result in the Turing model in \cite{Boche2022LimitsDL}. A similar result was obtained in \cite{colbrook21stable,bastounis21extended} for Oracle Turing machines, where Turing machine gain access to arbitrary real numbers via rational Cauchy sequences provided by an oracle. This is a clear distinction to Turing machines which operate on a proper subset of the real numbers --- on the computable numbers. An implication is that non-computability results in the Oracle Turing model do not necessarily translate to Turing machines --- only the reverse is true. However, the results in \cite{bastounis21extended, colbrook21stable} also transfer to the Turing model. Under the Oracle framework also the computability of various problems in learning, regularisation and computer-assisted proofs was analyzed \cite{bastounis21extended}. In our inverse problem application a non-computability statement holds in both models. For a detailed discussion about the differences we refer to \cite{Boche2022LimitsDL}. 

The idea behind the oracle approach is that the input representation in a computing device will be necessarily inexact. Either the inputs themselves are corrupted by noise, measurement error, rounding, etc. or the hardware is not able to represent the input exactly as is the case with digital hardware, i.e., Turing machines, and real numbers, in general. Moreover, floating-point arithmetic, a common approach to represent numbers on today's digital hardware, stores even rational number such as $1/3$ only approximately. To model this setting, coined inexact input, the computing device receives inputs (via the oracle) in form of rational Cauchy sequences approximating the "true" values. Thus, the input representation is detached from the actual computing device and any notion of computability on specific hardware assumes the described input representation. In this sense, the inexact input model aims for a general description of a given task independent (to a certain degree) from the utilized computing device.

In \cite{colbrook21stable, bastounis21extended} the inexact input model was not only applied to Turing machines (which yielded the aforementioned non-computability statement for Oracle Turing machines) but also to BSS machines. In other words, the capabilities of BSS machines for solving inverse problem was studied under inexact input representations (via an oracle). It was shown that under this assumption the same non-computability result holds for BSS machines as for Turing machine. It is important to stress that the found computational boundaries are inherently connected to the inexact input model but not necessarily to the power of the computing model.

In contrast, our goal is to assess the intrinsic capabilities and limitations of BSS machines to solve inverse problems. 
From our perspective, the core strength of BSS machines --- to store and process real numbers precisely without error --- is neglected when using the inexact input model. In particular, the input representations are sequences of rational numbers and therefore BSS machines behave very similar to Turing machines. Indeed, the differences between Turing and BSS machines come mainly into effect on irrational numbers which are excluded by the approximative input representations based on rational numbers. Thus, it is not surprising that in the inexact input model the same restrictions towards computability of inverse problems hold for BSS machines as for Turing machines. But at the same time not the full power of BSS machines is taken into consideration. Our findings clearly show that if BSS machines are allowed to process real numbers exactly --- instead of working with rational approximations --- then they have strictly greater capabilities than Turing machines for solving inverse problems.     

\subsection{Results and Impact}
\noindent
We consider the algorithmic solvability of finite-dimensional inverse problems on BSS machines and compare the findings to known results in the Turing model. For real-valued inverse problems we establish a clear disparity between capabilities of BSS and Turing machines. The implications for analog and digital computing devices are noteworthy. In general, analog hardware modeled by BSS machines allows for a solver which can be applied to any inverse problem of a given dimension. Thus, the solver is not connected to a specific inverse problem but is able to handle inverse problems of fixed dimension simultaneously. In contrast, on digital hardware, the currently predominant hardware device, the algorithmic solvability is limited such that the described solver can not exist. Nevertheless, less powerful solvers adapted to a specific inverse problem may still be realizable on digital hardware. However, they lack the universality of the general solver on analog hardware since they can be only applied to a fixed, pre-defined task. Hence, approaches to solve real inverse problems on upcoming analog hardware in form of neuromorphic devices may potentially have strictly greater capacity than current digital solutions.

In the more general case of complex-valued inverse problem the situation is more intricate. Here, the Turing and BSS model share limitations concerning the algorithmic solvability of finite-dimensional inverse problems. However, modifications in the problem formulation suffice to guarantee BSS solvability whereas the negative statement in the Turing model remains true. In particular, the algorithmic solvability in the BSS model of an approximate problem is proved. Consequently, for the modified problem formulations we can establish the same differences between analog and digital devices concerning algorithmic solvability as in the real case. This shows in principle that the Turing and BSS model behave differently and have distinct computational barriers in the field of complex-valued inverse problem.      

The implications of our findings concerning DL are two-fold. DL implemented on digital hardware is bounded by the limitations of the hardware itself for solving inverse problems. Therefore, reliable DL may only be realized in restricted settings, where only a limited set of inverse problems is admissible. In contrast, idealized analog hardware in principle allows for reliable DL systems tackling any inverse problem of fixed dimension simultaneously. Thus, DL systems implemented on neuromorphic hardware may potentially enable optimal and trustworthy reconstructions in inverse problems in very general circumstances.

\subsection{Outline}
\noindent
A concise overview of real number computing theory will be presented in \Cref{sec:RNC}. This is followed by an introduction to DL  with a particular focus on the inverse problems in \Cref{sec:DLforIP}. Next, in \Cref{sec:DLonBSS} the capabilities of DL on real number processing hardware modeled by BSS machines is studied. Thereby, we also establish the underlying problem of the algorithmic solvability of finite-dimensional inverse problems. \Cref{sec:CompRecMap} covers the formal statement of our main result about solvability of inverse problems in the real number computing model. Finally, a discussion about the implications of our finding on today's solvers --- in particular deep learning methods --- of inverse problems finishes the paper in \Cref{sec:Discussion}.

\section{Real Number Computing}\label{sec:RNC}
\noindent
In this section we will formally introduce a computing model over the real numbers as well as its properties.

\subsection{Blum-Shub-Smale machines}
\noindent
In 1989, Blum, Shub, and Smale proposed in \cite{Blum89BSSmachines} a general computing model over an arbitrary ring or field $R$: the Blum-Shub-Smale (BSS) machines. It is the basis of algebraic complexity theory. Not only is the BSS model defined for finite fields, but also allows to carry over important concepts from classical complexity theory in the Turing machine model to computational models over a larger variety of structures, e.g., infinite fields such as $\R$. Here, a BSS machine can store arbitrary real numbers, can compute all field operations on "$\R$", i.e., "$+$" and "$\cdot$", and can compare real numbers according to the relations "$<$", "$>$", and "$=$". BSS machines provide therefore the mathematical basis for real number signal processing. Thus, in principle, the BSS model is suitable to investigate the power of analog information processing
hardware like neuromorphic computing processors.

Turing machines that are able to work with arbitrary real numbers are so-called Oracle Turing machines \cite{Ko91ComplTheoryRealFunc}. An oracle provides for each real number $x$ a Cauchy sequence $(a_n)_{n\in\N}$ of rational numbers that controls the approximation error, i.e., $\abs{x-a_n}<2^{-n}$ for $n\in\N$. Then, the input to the Turing machine is this sequence of rational numbers representing the corresponding real number $x$ and the Turing machine performs its operation on the given sequence. 
We wish to remark that BSS machines differ from such Oracle Turing machines.

In essence, BSS machines can be considered a generalization of Turing machines. If $R$ is chosen to be $\Z_2 = \{ \{0,1\},+,\cdot \}$, then BSS machines recover the theory of Turing machines. Moreover, a BSS machine is similar to a Turing machine in the sense that it operates on an infinite strip of tape according to a so-called program. This is a finite directed graph with five types of nodes associated with different operations, namely input node, computation node, branch node, shift node, and output node. For every admissible input, the output of a BSS machine is calculated according to the program in a finite number of steps, i.e., the BSS machine finishes its program in finite time and stops. For a detailed introduction and description of BSS machines and programs running on BSS machines, we refer the reader to \cite{Blum04CompoverReals, Blum98ComplRealComp} and references therein.

\begin{Definition} 
    \textit{BSS-computable functions} are input-output maps $\Phi$ of the BSS machine $\mathcal{B}$. The output $\Phi_{\mathcal{B}}(x)$ is defined if the BSS machine $\mathcal{B}$ terminates on input $x$ and the output is generated by the program of the BSS machine $\mathcal{B}$. 
\end{Definition}

\begin{Definition}
    A set $A \subset \R^N$ is \textit{BSS-decidable} if there exists a BSS machine $\mathcal{B}_A$ such that, for all $x \in \R^N$, we have $\mathcal{B}_A(x) = \chi_A(x)$, i.e., the characteristic function $\chi_A$ of the set $A$ is BSS-computable.\\
    A set $A \subset \R^n$ is \textit{BSS-semidecidable} if there exists a BSS machine $\mathcal{B}_A$ such that, for all $x \in A$, we have $\mathcal{B}_A(x) = 1$.
\end{Definition}

In this paper, we consider BSS machines and study whether or not they are capable of computing the reconstruction maps of inverse problems. We prove that these reconstruction maps are indeed computable by BSS machines showing that real number signal processing enables algorithmic solvability of this class of problems. 

To establish our result we utilize the connections between BSS machines and semialgebraic sets, i.e., sets that can be described by finitely many polynomial equations and inequalities.

\subsection{Semialgebraic Sets}
\noindent
The concept of semialgebraic sets, defined by polynomial equations and inequalities, is key for BSS-decidability. Every BSS-semidecidable set in $\R^n$ can be expressed as a countable union of semialgebraic sets \cite{Blum98ComplRealComp}. This follows from the algebraic structure of the available basic operations in BSS-algorithms. For an introduction and discussion of semialgebraic sets, we refer to \cite{Bochnak98RealAlgGeom}. We follow mainly \cite{Basu98AlgRealAlgGeom} when introducing this topic.

\begin{Definition}
    The \textit{class of semialgebraic sets} in $\R^n$ is the smallest class of subsets of $\R^n$ that contains all sets $\{x \in \R^n : p(x) > 0\}$ with real polynomials $p : \R^n \to \R$ and is in addition closed under finite intersections and unions as well as complements.    
\end{Definition}
\begin{Remark}
    Any semialgebraic set in $\R^n$ can be expressed as the finite union of sets of the form $\{x \in \R^n \mid p(x) = 0 \wedge \bigwedge_{q \in \mathcal{Q}} q(x) > 0\}$ with a finite set $p \cup \mathcal{Q}$ of real polynomials $p,q : \R^n \to \R$.  
\end{Remark}

The Tarski-Seidenberg theorem \cite{Tarski51ProjThm, Seidenberg54ProjThm} states that any projection map on $\R^n \to \R^m$, where $n \geq m$, projects semialgebraic sets in $\R^n$ onto semialgebraic sets in $\R^m$. A consequence is that quantifier elimination is possible over $\{\R,+,\cdot,0,1,>\}$. This means that every first-order formula built over $\{\R,+,\cdot,0,1,>\}$, i.e., a formula involving polynomials with logical operations "$\wedge$", "$\vee$", and "$\neg$" and quantifiers "$\forall$", "$\exists$", can be algorithmically transformed into an equivalent quantifier-free formula. Thus, there exists an algorithm that, given an arbitrary formula (with quantifiers) as input, computes an equivalent formula without quantifiers. This allows us to eliminate all quantifiers in semialgebraic sets algorithmically. 

For instance, the projection theorem implies that a semialgebraic set defined by formulas of the form 
\begin{equation*}
    \{(x_1,\dots,x_n) \in \R^n : \exists x_{n+1} \text{ such that } p(x_1,\dots,x_n,x_{n+1})\, \Delta \,0\}
\end{equation*}  
can be rewritten as a semialgebraic set defined by formulas of the form 
\begin{equation*}
    \{(x_1,\dots,x_n) \in \R^n : q(x_1,\dots,x_n)\, \Delta\, 0\},    
\end{equation*}
where $\Delta \in \{<,>,\leq,\geq,=\}$ and $p,q$ are real polynomials. 

Furthermore, Tarski found an algorithm to decide the truth of sentences, i.e., formulas that have no free variables, in the first order language built from $\{\R,+,\cdot,0,1,>\}$ (cf. \cite{Cucker2019CompSemiAlg}). This algorithm directly results from the transformation of semialgebraic sets with quantifiers into semialgebraic sets without quantifiers.

The complexity of the elimination process has been reduced significantly since the original algorithm proposed by Tarski, see \cite{Cucker2019CompSemiAlg} and references therein. However, for many applications in information theory the computational complexity of quantifier elimination still remains too high. In case of computability results the complexity of the algorithm is not relevant (just its existence), hence quantifier elimination provides a very useful theoretical tool. Tarski’s algorithm can be computed on a BSS machine. In contrast, the elimination of quantifiers is in general not possible on Turing machines.  

Many problems in real geometry can be formulated as first-order formulas. Tarski-Seidenberg Elimination Theory can then be used to solve these problems. For our needs, the crucial observation is that, based on quantifier-elimination, it is possible to develop an algorithm for finding 
a minimizer of a polynomial on a semialgebraic set if there exists one. Here, an algorithm refers to a computational procedure that takes an input and after performing a finite number of admissible operations produces an output. The feasible operations are exactly the ring or field operations of the considered structure and, additionally, comparisons between elements if the structure is ordered. Thus, by construction such an algorithm can be executed on a BSS machine.

For further details such as the procedure of the algorithm, we refer to \cite{Basu98AlgRealAlgGeom}. We only state the input and output relation of the algorithm for real polynomials for the convenience of the reader:

\begin{algorithm}[H]
\caption{Global Optimization}\label{alg:GlOp}
\begin{algorithmic}
    \Require  a finite set $\mathcal{P}$ of real polynomials $p:\R^n \to \R$ describing a (non-empty) semialgebraic set $S$ by a quantifier free formula $\Phi$ and a polynomial $f:\R^n\to\R$.
    \Ensure the infimum $w$ of $f$ on $S$, and a minimizer, i.e. a point $x^\ast \in S$ such that $f(x^\ast) = w$ if such a point exists.
\end{algorithmic}
\end{algorithm}

The non-emptiness of a semialgebraic set can be decided by an algorithm in the BSS model. This is again a special case of quantifier elimination. Hence, we can check whether the optimization domain is empty before calling the global optimization algorithm. Let us mention that these capabilities of BSS machines also highlight the differences to Turing machines. The analogous discrete problem, commonly known as Hilbert's tenth problem \cite{Hilbert00Problems}, of deciding whether a polynomial equation with integer coefficients and a finite number of unknowns possesses integer-valued roots is not solvable on Turing machines \cite{Matiyasevich70Diophantine}.

\section{Deep Learning for Inverse Problems}\label{sec:DLforIP}
\noindent
In this section, we give a short introduction to DL with a particular focus on solving inverse problems \cite{Jin17DCNNInvProb, Adler17DNNInvProb, Ongie20DLInvProb, Schlemper18CNNvsCS, Mousavi15DLvCSSigRec}. For a comprehensive depiction of DL theory we refer to \cite{Goodfellow16DL} and \cite{Berner2021modernMathDL}.

\subsection{Inverse Problems}
\noindent
We consider the following finite-dimensional, underdetermined linear inverse problem:
\begin{equation}\label{eq:problem}
    \text{Given noisy measurements }  y = Ax + e \in \C^m \text{ of } x \in \C^N, \text{ recover } x,
\end{equation}
where $A \in \C^{m \times N}, m< N$, is the \textit{sampling operator} (or measurement matrix), $e \in \C^m$ is a noise vector, $y \in \C^m$ is the \textit{vector of measurements}, and $x \in \C^N$ is the object to recover (typically a vectorized discrete image). Classical examples from medical imaging are magnetic resonance imaging (MRI), where $A$ encodes the Fourier transform, and computed tomography (CT), where $A$ encodes the Radon transform. In addition, in practice the underdetermined setting $m < N$ is common, the reason being that in many applications the number of measurements is severely limited due to time, cost, power, or other constraints.

There exist different approaches to solve inverse problem, a particularly successful is given by deep learning. The area of inverse problems in imaging sciences was to a certain extent swept by deep learning methods in recent years. A unified framework for image reconstruction by manifold approximation as a data-driven supervised learning task is proposed in \cite{Zu18AutoMap}. In \cite{Arridge2019SolvingIP}, the authors survey methods solving ill-posed inverse problems combining data-driven models, and in particular those based on deep learning, with domain-specific knowledge contained in physical–analytical models. Approaches to tackle specific inverse problems have been presented for instance in \cite{Bubba19Shearlet} for limited angle CT, in \cite{Yang16MRIDL, Hammernik18MRIDL2} for MRI, in \cite{Chen2018LowLightPhoto} for low-light photography, in \cite{Rivenson17DLmicroscopy} for computational microscopy, and in \cite{Araya18DLtomography} for geophysical imaging.

\subsection{Basics of Deep Learning}
\noindent
The inspiration for DL comes from biology, as it utilizes an architecture called (artificial) \textit{neural network} mimicking the human brain. A neural network consists of a collection of connected units or nodes subdivided into several layers allowing an artificial neural network to learn several abstraction levels of the input signal. In its simplest form an \textit{L-layer feedforward neural network} is a mapping $\Phi : \R^d \to \R^m$ of the form
\begin{equation}\label{eq:NNdef}
    \Phi(x) = T_L \rho(T_{L-1}\rho(\dots \rho(T_1 x))), \quad x\in \R^d ,
\end{equation}
where $T_{\ell} : \R^{n_{\ell- 1}} \to \R^{n_{\ell}}$, $\ell=1,\dots,L$, are affine-linear maps
\begin{equation*}
    T_{\ell} x = W_{\ell} x + b_{\ell}, \quad W_{\ell} \in \R^{n_{\ell} \times n_{\ell-1}}, b_{\ell} \in \R^{n_{\ell}},
\end{equation*}
$\rho: \R \to \R$ is a non-linear function acting component-wise on a vector, and $n_0 = d, n_L = m$. The matrices $W_{\ell}$ are called \textit{weights}, the vectors $b_{\ell}$ \textit{biases}, and the function $\rho$ \textit{activation function}. A neural network can easily be adapted to work with complex inputs by representing the inputs as real vectors consisting of the real and imaginary parts. 

Thus, a neural network implements a non-linear mapping parameterized by its weights and biases. The primary goal is to approximate an unknown function based on a given set (of samples) of input-output value pairs. This is typically accomplished by adjusting the network’s parameters, i.e., its weights and biases, according to an optimization process; the standard approach so far is stochastic gradient descent (via backpropagation \cite{Rumelhart86BP}). This process is usually referred to as the training of a neural network.

Triggered by the drastic improvements in computing power of digital computers and the availability of vast amounts of training data, the area of DL has seen overwhelming practical progress in the last fifteen years. Deep neural networks (which in fact inspired the name "deep learning"), i.e., networks with large numbers of layers, lead to several breakthroughs in many applications \cite{He2015DelvingDI, Silver16Go,Brown20GPT3,Senior20DeepFold}. Moreover, the current trends towards neuromorphic hardware and more broadly towards biocomputing (see \Cref{seq:AnaDigComp}) promise further fundamental developments in this area. Therefore, the necessity for a thorough analysis of the mathematical foundations of DL is immanent.

\section{Deep Learning on BSS machines}\label{sec:DLonBSS}
\noindent
An obvious question is whether a neural network can be implemented on a BSS machine, in particular, can the network be evaluated on a given input. If this is not the case, then certainly BSS machines do not provide the necessary tools to improve DL in comparison with implementations on Turing machines. Therefore, the input-output relation of the network has to be computable, i.e., the network has to be a BSS-computable function. Next, we show that this is indeed the case under some mild assumptions.

\begin{Theorem}
    A neural network $\Phi$ as defined in \eqref{eq:NNdef} is a BSS-computable function given that the activation function $\rho: \R \to \R$ is BSS-computable.
\end{Theorem}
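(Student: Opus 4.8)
The plan is to decompose $\Phi$ as defined in \eqref{eq:NNdef} into its elementary building blocks --- the affine-linear maps $T_\ell$ and the component-wise applications of $\rho$ --- to show that each block is BSS-computable, and then to invoke closure of the class of BSS-computable functions under composition. Since the architecture is fixed, the weights $W_\ell$ and biases $b_\ell$ are fixed real numbers and may be stored as machine constants in the BSS machine.

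First I would verify that each affine-linear map $T_\ell x = W_\ell x + b_\ell$ is BSS-computable. Evaluating $W_\ell x + b_\ell$ amounts to a fixed number of real additions and multiplications, all of which are admissible operations in the BSS model. Because the dimensions $n_{\ell-1}, n_\ell$ are finite, this requires only finitely many computation nodes, so $T_\ell$ terminates on every input and is BSS-computable.

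Next I would treat the activation step. By assumption $\rho: \R \to \R$ is BSS-computable, say by a BSS machine $\mathcal{B}_\rho$. The component-wise map $x \mapsto (\rho(x_1), \dots, \rho(x_n))$ applies $\rho$ to each of the finitely many coordinates; a BSS machine realizing it simply runs the subprogram $\mathcal{B}_\rho$ successively on each coordinate $x_i$ and collects the outputs. Since there are finitely many coordinates and each call terminates, the component-wise activation is again BSS-computable. Finally, writing $\Phi = T_L \circ \rho \circ T_{L-1} \circ \dots \circ \rho \circ T_1$ exhibits $\Phi$ as a composition of finitely many BSS-computable maps. The class of BSS-computable functions is closed under composition: given programs for two such maps, one obtains a program for their composition by directing the output node of the first program into the input node of the second. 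Concatenating the programs for the alternating affine and activation blocks therefore yields a single BSS machine whose input-output map equals $\Phi$ and which halts after finitely many steps on every $x \in \R^d$, establishing the claim.

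The only point requiring care --- and hence the main, though modest, obstacle --- is verifying that chaining the subprograms preserves termination and that feeding each coordinate through $\mathcal{B}_\rho$ can be orchestrated within a single finite program. Because the architecture is fixed and all dimensions are finite, neither issue introduces any genuine difficulty; the argument is essentially a bookkeeping of the closure properties of BSS programs.
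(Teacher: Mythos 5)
Your proposal is correct and follows essentially the same route as the paper's own proof: decompose $\Phi$ into the affine-linear maps (realized by finitely many field operations on computation nodes) and the component-wise activations (BSS-computable by hypothesis), then compose. The added detail on closure under composition and termination is sound bookkeeping but does not change the argument.
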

\begin{proof}
    The network $\Phi$ is a concatenation of affine linear maps represented by matrix-vector multiplications and vector additions and non-linearities represented by the activation function. Affine linear maps are BSS-computable functions, since they are formed exclusively through basic field operations (addition and multiplication) that can be executed on the computation nodes of a BSS machine. Since the activation function $\rho$ is assumed to be BSS-computable, there does exist a program of a BSS-machine that generates the input-output relation of $\Phi$ which is equivalent to saying that $\Phi$ is BSS-computable.
\end{proof}
\begin{Remark}
    The most common and universally applied activation function is the ReLU activation $\text{ReLU}(x) = \max\{x,0\}$, which is merely a branch condition depending on a comparison, i.e., it can be performed on the branching nodes of a BSS machine. Hence, ReLU activation is a BSS-computable function.   
\end{Remark}

Since the evaluation of a given neural network is feasible on a BSS machine the subsequent question is whether a specific neural network can be obtained for solving inverse problems, i.e., can a neural network be trained on a BSS machine to solve inverse problems? In the next section, we will formalize this question precisely.

\subsection{Problem Setting}\label{sec:InvProb}
\noindent
Deep learning techniques can be incorporated in the solution approach of inverse problems in various ways. Depending on the properties of a given inverse problem or the aspired application, a specific utilization of DL in the reconstruction process may seem the most promising. We focus on an end-to-end approach, where the goal is to directly learn a mapping from measurements $y$ to reconstructed data $x$; see  \cite{Ongie20DLInvProb} for alternative approaches which employ deep learning at certain steps in the processing pipeline, e.g., in order to learn a regularizer. The end-to-end approach represents the most fundamental method, since it requires no further problem specific knowledge or assumptions.

Then, the goal of the training process is to obtain a deep neural network $\Phi$, which approximates the mapping from measurements to the original data. An important question is whether $\Phi$ exists, and upon existence, can the network be constructed by an algorithm? In other words, does there exist a training algorithm that computes suitable weights and biases such that $\Phi$ performs the intended reconstruction reliably? Such a training algorithm can only exist if the underlying inverse problem is algorithmically solvable on the applied hardware platform. Note that mere algorithmic solvability does not imply the existence of a practically applicable and efficient algorithm to solve inverse problems. Therefore, applying DL to algorithmically solvable problems is still beneficial since they ideally provide a generic and straightforward approach in practice. On the other hand, algorithmic non-solvability clearly indicates limitations that even DL can not circumvent so that e.g. a trade-off between generality and reliability can not be avoided.   

A general solution strategy for inverse problems is to rewrite the model \eqref{eq:problem} in a mathematically more tractable form since \eqref{eq:problem} is in general ill-posed. Aiming to account for uncertainties of the measurements, a relaxed formulation is considered, which has a considerably simpler solution map than the original description \eqref{eq:problem}. Typically, the goal is to express \eqref{eq:problem} as an optimization problem given a sampling operator $A \in \C^{m\times N}$ and a vector of measurements $y \in \C^m$. There exist various formulations of this optimization problem, a straightforward one is given by the least-squares problem
\begin{equation}\label{eq:lsq}
    \argmin_{x\in C^N} \norm[\ell_2]{Ax-y}. \tag{ls}
\end{equation}
A minimizer of the least-squares problem can be straightforwardly obtained via the pseudoinverse of $A$. Since there exists an algorithm in the BSS model that computes the pseudoinverse of an arbitrary matrix (see \Cref{sec:appb}, in particular proof of \Cref{thm:BPapprox}), we can conclude that a minimizer of \eqref{eq:lsq} can be algorithmically computed on BSS machines. Conversely, such an algorithm can not exist on Turing machines \cite{Boche2022PseudoInverse}. However, the solution of \eqref{eq:lsq} is generally not unique and the individual solutions tend to display different qualitative properties, i.e., not all minimizers of \eqref{eq:lsq} are of the same value when reconsidering the original problem \eqref{eq:problem}. Therefore, additional regularization terms are added to the optimization problem which impose desired characteristics on the solution set. This results in minimization problems such as (quadratically constrained) \textit{basis pursuit} \cite{Candes06Stable, Chen1998AtomicDec} 
\begin{equation}\label{eq:sparseprob}
    \argmin_{x \in \C^N} \norm[\ell_1]{x} \text{ such that } \norm[\ell_2]{Ax -y} \leq \varepsilon \tag{bp}
\end{equation}
and unconstrained \textit{lasso} \cite{Belloni11SquareRootLasso, Tropp06Relax, Lv2011GroupLasso}
\begin{equation}\label{eq:lasso}
    \argmin_{x \in \C^N} \lambda \norm[\ell_1]{x} + \norm[\ell_2]{Ax -y}^2, \tag{la}
\end{equation}
where the magnitude of $\varepsilon>0$ and $\lambda>0$ control the relaxation. The underlying idea is to exploit sparsity in the recovery without explicitly forcing sparse solutions via the $\ell_0$ norm, which is typically intractable in many applications.  

In order to reconstruct data from any given measurement we need to solve an optimization task described in \eqref{eq:sparseprob} or \eqref{eq:lasso}. An algorithmic solution can by definition only exist if the reconstruction map --- from measurements $y$ to corresponding data $x$ --- is computable on a given computing device. The actual algorithm to evaluate the reconstruction map may still be unknown, but by proving the computability of the map we can claim its existence, or, in case the map is not computable, exclude its existence. Note that there still might exist algorithms that are able to solve a problem to a certain degree in practice although the problem is not computable. Depending on the hardware platform, the algorithm might only work properly for a restricted class of inputs, for a certain accuracy, or without correctness guarantees. Hence, understanding the limitations of a given computing device is a necessity if one aims to evaluate the solvability of a certain problem.

Recall, that computability of a function $f$ in the BSS model is equivalent to the existence of a BSS machine $\mathcal{B}$ such that its input-output map $\Phi_{\mathcal{B}}$ satisfies $f(z) = \Phi_{\mathcal{B}}(z)$ for any $z$ in the domain of $f$. We introduce the following multi-valued functions to study the computability of inverse problems. For fixed sampling operator $A  \in \C^{m \times N}$ and some fixed optimization parameter $\mu > 0$ 
denote by 
\begin{align}\label{eq:Xi_Aeps}
    \Xi_{O,A,\mu}: \C^{m} &\rightrightarrows \C^N \\
    y &\mapsto P(A,y,\mu) \nonumber 
\end{align}
the (multi-valued) reconstruction map so that $\Xi_{P,A,\mu}(y)$ represents the set of minimizers for the optimization problem $P(A,y,\mu)$ given a measurement $y \in \C^m$. 
For instance, for basis pursuit $O(A,y,\mu)$ is described in \eqref{eq:sparseprob} with $\mu =\varepsilon$.

The function $\Xi_{P,A,\mu}$ allows us to examine the algorithmic solvability of a specific inverse problem determined by the considered optimization problem $P$ with fixed optimization parameter $\mu$ and the sampling operator $A$. Thus, a hypothetical algorithm should take a measurement $y \in \C^m$ as input and yield the corresponding set of reconstructions $\Xi_{P,A,\mu}(y)$. However, we are not only interested in approximating the reconstruction map of a specific inverse problem associated to the fixed sampling operator $A$. Instead, the key is that we aim to approximate the reconstruction map of any inverse problem of fixed dimension; the reason being that we are interested in obtaining an algorithm which may be applied to an arbitrary inverse problem described by the optimization $P$, without adjusting it to specific properties of an individual case. The reconstruction map of these inverse problems with fixed dimension $m\times N$ and parameter $\mu>0$ is given by
\begin{align}\label{eq:Xi_mNeps}
    \Xi_{P,m,N,\mu}: \C^{m\times N} \times \C^m &\rightrightarrows \C^N \\  
    (A,y) &\mapsto P(A,y,\mu). \nonumber 
\end{align}
Thus, $\Xi_{P,m,N,\mu}(A,y)$ represents the set of minimizers for the optimization problem $P(A,y,\mu)$ given a sampling operator $A  \in \C^{m \times N}$ and an associated measurement $y \in \C^m$, i.e.,
\begin{equation*}
    \Xi_{P,m,N,\mu}(A,y) = \Xi_{P,A,\mu}(y).    
\end{equation*}
To clarify the difference to \eqref{eq:Xi_Aeps}, note that a potential algorithm takes the sampling operator $A \in \C^{m \times N}$ as additional input. Therefore, the algorithm is required to succeed for any sampling operator of dimension $n \times N$ associated to an inverse problem described by $P$. 

We also consider the even more general case where the optimization parameter $\mu$ is not fixed but also part of the input. Hence, let 
\begin{align}\label{eq:Xi_mN}
    \Xi_{P,m,N}: \C^{m\times N} \times \C^m \times \R_{>0} &\rightrightarrows \C^N \\
    (A,y,\mu) &\mapsto P(A,y,\mu) \nonumber 
\end{align}
with $\Xi_{P,m,N}(A,y,\mu)$ representing the set of minimizers of the optimization problem $P(A,y,\mu)$ given an optimization parameter $\mu>0$, a sampling operator $A  \in \C^{m \times N}$, and an associated measurement $y \in \C^m$. 

Observe that the mappings \eqref{eq:Xi_Aeps}, \eqref{eq:Xi_mNeps}, and \eqref{eq:Xi_mN} are in general set-valued, since the solution of the considered optimization problems, e.g., \eqref{eq:sparseprob} and \eqref{eq:lasso}, does not need to be unique. However, computability of multi-valued maps is a stronger result than required for algorithmic solvability in practical applications. Indeed, computing all feasible solutions is much harder than computing just one. In practice, in most circumstances the user is not interested in the general case of the whole solution set, but it suffices to obtain exactly one feasible solution. Thus, a hypothetical algorithm should take a sampling operator $A \in \C^{m \times N}$, a measurement $y \in \C^m$, and a optimization parameter $\mu>0$ as input and yield exactly one corresponding reconstruction $x \in \Xi_{P,m,N}(A,y,\mu)$. Therefore, also single-valued restrictions of the original multi-valued mappings are significant, because they represent this simpler problem.

To formalize this concept, note that for a multi-valued function $f:\mathcal{X} \rightrightarrows \mathcal{Y}$ there exists for each input $x \in \text{dom}(f)$ at least one output $y_x^\ast \in f(x) \subset \mathcal{P}(\mathcal{Y})$. A single-valued restriction of $f$ can then be defined as the function
\begin{align*}
    f^s: \mathcal{X} &\to \mathcal{Y} \\
    x &\mapsto y_x^\ast.    
\end{align*}
We denote by $\mathcal{M}_f$ the set of all the single-valued functions associated with the multi-valued function $f$. Hence, the set $\mathcal{M}_f$ encompasses all single-valued functions $f^s$ that are formed by restricting the output of a multi-valued map $f$ to a single value for each input. If there exists at least one function in $\mathcal{M}_f$ that is computable, then we can algorithmically solve the problem proposed by $f$.

\begin{Definition}\label{def:AlgSolv}
    A problem with an input-output relation described by a multi-valued function $f:\mathcal{X} \rightrightarrows \mathcal{Y}$ is \textit{algorithmically solvable on a BSS machine or Turing machine} if there exists a function $f^s \in \mathcal{M}_f$ that is computable on a BSS or Turing machine, respectively. 
\end{Definition}
In particular, it is not relevant which of the (possibly infinitely many) functions in $\mathcal{M}_f$ is computable, since any of those is an appropriate solution. Even more, it may be the case that most functions in $\mathcal{M}_f$ are non-computable; but as long as we succeed to show computability for just one of them, we consider the task algorithmically solvable. For our needs, the relevant sets of functions are $\mathcal{M}_{\Xi_{P,m,N,\mu}}$ and $\mathcal{M}_{\Xi_{P,m,N}}$ corresponding to the multi-valued functions $\Xi_{P,m,N,\mu}$ and $\Xi_{P,m,N}$ introduced above, respectively. \Cref{img:AlgSketch} now illustrates a comparison and summary of the described settings. 

\begin{figure}[h]
    \centering
\scalebox{0.8}{
\begin{tikzpicture}[->,>=stealth',thick] 

    \matrix (m) [matrix of nodes,row sep=0.7cm,column sep=0.25cm, ampersand replacement=\&,  nodes={align=center, text width =3.25cm}]{
    \& idealised practical application \& \& minimal theoretical requirement \& {or \\ equivalently}\\
    \& {inverse problem \\ $(A,y,\mu)$} \& \& {input \\ $(A,y,\mu)$} \& {$(A,y,\mu)$}\\
    \& |[draw]|{algorithm \\ $\mathcal{A}$} \& \& |[draw]| {BSS machine \\ $\mathcal{B}$} \& \\
    {measurement \\ $y$} \& |[diamond, draw, inner sep=-1ex,aspect=2.5]| {solver} \& {reconstr. of $x \in$ \\ $\Xi_{P,m,N}(A,y,\mu)$} \& {output $x \in$ $\Xi_{P,m,N}(A,y,\mu)$} \& {$\Xi^s(A,y,\mu)$, \\ $\Xi^s \in \mathcal{M}_{\Xi_{P,m,N}}$} \\
    };
    
    \path[shorten >=0.05cm] (m-2-2) edge (m-3-2);
    \path[shorten >=0.05cm, shorten <=0.05cm] (m-3-2) edge (m-4-2);
    \path[shorten >=0.05cm]  ([xshift=-10pt,yshift=9pt]m-4-1.east) edge (m-4-2.west);
    \path[shorten <=0.05cm] (m-4-2.east) edge ([xshift=5pt,yshift=9pt]m-4-3.west);
    
    \path[shorten >=0.05cm] (m-2-4) edge (m-3-4);
    \path[shorten >=0.05cm, shorten <=0.05cm] (m-3-4) edge (m-4-4);
    
    \path[shorten >=0.05cm, shorten <=0.05cm] (m-2-5) edge node [right, align=center, text width =2cm] {BSS-comp. \\ function} (m-4-5);
    
    \draw[decorate,decoration={snake,post length=2mm}] (m-1-2) to  (m-1-4);
    \draw[<->,decorate,decoration={snake,post length=1.5mm, pre length=2mm}] ([xshift=-10pt]m-1-4.east) to  ([xshift=21pt]m-1-5.west);
\end{tikzpicture}
}
\caption{Our goal is to study the existence of an algorithm $\mathcal{A}$ on an analog computing device which tackles inverse problem described by an optimization process $P$. In particular, $\mathcal{A}$ takes the sampling operator $A\in \C^{m\times N}$, measurement $y \in \C^m$, and optimization parameter $\mu>0$ and generates a solver which reconstructs data $x \in \Xi_{P,m,N}(A,y,\mu)$ given measurement $y$. In an abstract mathematical description, the setting boils down to the existence of a BSS machine $\mathcal{B}$ taking inputs $(A,y,\mu)$ and outputting $x$. The existence of this BSS machine requires that there exists at least one BSS-computable function $\Xi^s \in \mathcal{M}_{\Xi_{P,m,N}}$.
}
\label{img:AlgSketch}
\end{figure}
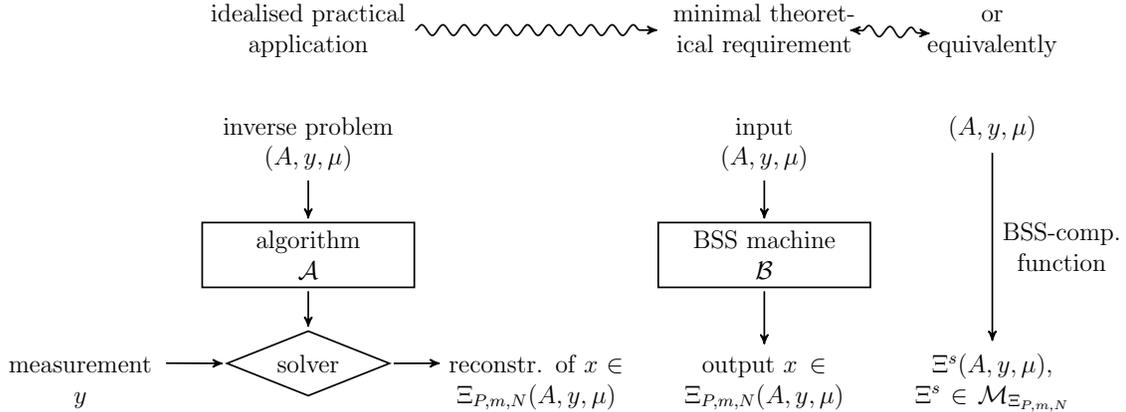

\section{Computability of the Reconstruction Map}\label{sec:CompRecMap}
\noindent
Our goal is to study BSS-computability of the reconstruction maps introduced in \eqref{eq:Xi_mNeps} and \eqref{eq:Xi_mN} for basis pursuit \eqref{eq:sparseprob} and lasso optimization \eqref{eq:lasso}. 
First, we consider basis pursuit optimization. In the Turing model the non-computability of any element of $\mathcal{M}_{\Xi_{\text{bp},m,N,\varepsilon}}$ has already been established \cite{Boche2022LimitsDL} --- which also immediately implies the non-computability of the multi-valued map $\Xi_{\text{bp},m,N,\varepsilon}$ itself. In particular, for relaxation parameter $\varepsilon \in (0,1)$ and fixed dimension $m,N \in \N$, $m < N$ any (single-valued) map in $\mathcal{M}_{\Xi_{\text{bp},m,N,\varepsilon}}$ is not computable on Turing machines. These maps represent the practically relevant task to compute exactly one suitable reconstruction. Does this result still hold when considering the BSS model? In the BSS model the situation is in fact quite different and a general non-computability result does not hold any more. Under certain circumstances we can even guarantee the computability of at least one function in $\mathcal{M}_{\Xi_{\text{bp},m,N,\varepsilon}}$.

We now consider the real and complex case separately, since the BSS model shows different behaviour depending on the underlying structure.

\subsection{Real Case}
\noindent
Given a multi-valued mapping $f:\mathcal{X} \rightrightarrows \mathcal{Y}$ we denote by $f^\R$ its restriction to real inputs and outputs. Although only the complex domain was considered explicitly in \cite{Boche2022LimitsDL}, the non-computability in the Turing model of any map in $\mathcal{M}_{\Xi^\R_{\text{bp},m,N,\varepsilon}}$ is still valid when restricting to the real domain. This follows immediately from the fact that the given proof is also accurate in the strictly real case. In contrast, we show that there exists at least one element in $\mathcal{M}_{\Xi^\R_{\text{bp},m,N,\varepsilon}}$ that is computable in BSS sense. Even more, we prove that the dependence of the relaxation parameter $\varepsilon$ can be incorporated in the reconstruction map without negatively influencing the computability property. Hence, there exists a function in $\mathcal{M}_{\Xi^\R_{\text{bp},m,N}}$ that is BSS-computable.    

\begin{Lemma}\label{lemma:BSSReal}
    For $A \in \R^{m \times N}$, $y \in \R^m$, and $\varepsilon>0$ set $S_{(A,y,\varepsilon)} \coloneqq \{ x \in \R^N \mid \norm[\ell_2]{Ax -y} \leq \varepsilon\}$ and consider the multi-valued map
    \begin{align*} 
        \Psi : \R^{m\times N} \times \R^m \times \R_{\geq 0}  &\rightrightarrows \R^N \\
        (A,y,\varepsilon) &\mapsto \{x^\ast \in \R^N \mid  \norm[\ell_1]{x^\ast} = \min_{x \in S_{(A,y,\varepsilon)}} \norm[\ell_1]{x} \text{ and } \norm[\ell_2]{Ax^\ast -y} \leq \varepsilon \}.
    \end{align*}    
    Then, there exists a (single-valued) function $\Psi^s \in \mathcal{M}_\Psi$ that is BSS-computable.
\end{Lemma}

\begin{proof}
We need to show that there exists a program for a BSS machine $\mathcal{B}$ so that its input-output map $\Phi_\mathcal{B}$ is equivalent to a function $\Psi^s \in \mathcal{M}_\Psi$. Given $A \in \R^{m \times N}$, $y \in \R^m$, and $\varepsilon > 0$ we therefore have to compute $x^\ast$ with   
\begin{equation}\label{eq:RealInvProb}
    x^\ast \in \argmin_{x \in \R^N} \norm[\ell_1]{x} \text{ such that } \norm[\ell_2]{Ax -y} \leq \varepsilon.
\end{equation}
For this, let $A = (a_{i,j})$ and note that
\begin{equation*}
    \norm[\ell_2]{Ax -y} \leq \varepsilon \quad \Leftrightarrow \quad \norm[\ell_2]{Ax -y}^2 \leq \varepsilon^2 \quad \Leftrightarrow \quad \sum_{i=1}^m\big(\sum_{j=1}^N a_{i,j}x_j -y_i \big)^2 -\varepsilon^2 \leq 0.     
\end{equation*}
This implies
\begin{equation*}
    S_{(A,y,\varepsilon)} = \{x\in\R^N \mid \norm[\ell_2]{Ax -y} \leq \varepsilon \} = \{x\in\R^N \mid q(x) \leq 0 \}    
\end{equation*}
where $q: \R^N \to \R$ is a polynomial, i.e., $S_{(A,y,\varepsilon)}$ is a semialgebraic set with a quantifier free description.

Now, our goal is to apply the optimization algorithm, \Cref{alg:GlOp}, to find a minimizer of a polynomial on $S_{(A,y,\varepsilon)}$. In its current form, $\norm[\ell_1]{x} = \sum_i \abs{x_i}$ is not a polynomial, however, we can rewrite it as
\begin{equation*}
    \norm[\ell_1]{x} = \sum_i \abs{x_i} = \sum_i x_i^+ + x_i^-
\end{equation*}
where $x_i^+ = \max\{ 0, x_i\}$ and $x_i^- = - \min\{ 0,x_i\}$. Observe that $p: \R^{2N} \to \R$, $(x^+,x^-) \mapsto \sum_i {x_i^+ + x_i^-}$ is indeed a polynomial.

Hence, we can conclude that \eqref{eq:RealInvProb} is equivalent to the problem
\begin{equation*}
    (w^\ast,z^\ast)\in \argmin_{\substack{w,z \in \R^N,\\ w,z \geq 0}} \sum_{i=1}^N w_i+z_i \text{ such that } \norm[\ell_2]{A(w-z) -y} \leq \varepsilon.
\end{equation*}
Now it is easy to deduce that the task is to minimize the polynomial $f: \R^{2N} \to \R$, $f(w,z)= \sum_{i=1}^N w_i+z_i$ on 
\begin{align}\label{eq:pols}
    &\{(w,z)\in\R^{2N} \mid \norm[\ell_2]{A(w-z) -y} \leq \varepsilon, w\geq 0, z\geq 0 \}\nonumber \\
    &=\{(w,z)\in\R^{2N} \mid q(w,z) \leq 0 \wedge \bigwedge_{i=1}^N  r_i(w,z) \geq 0 \wedge \bigwedge_{i=1}^N s_i(w,z) \geq 0, \} 
\end{align} 
where $q,r_i,s_i: \R^{2N} \to \R$ are polynomials --- $q$ describes the $\ell_2$ norm term whereas $r_i(w,z) = w_i, s_i(w,z)= z_i$ describe the non-negativity conditions. Consequently, 
\begin{equation*}
    \{(w,z)\in\R^{2N} \mid \norm[\ell_2]{A(w-z) -y} \leq \varepsilon, w\geq 0, z\geq 0 \}
\end{equation*}
is a semialgebraic set with a quantifier free description. Thus, we can apply the optimization algorithm, \Cref{alg:GlOp}, to minimize the polynomial $f$ on a semialgebraic set and obtain a corresponding minimizer $(w^\ast,z^\ast) \in\R^{2N}$. By construction, the difference $x^\ast = w^\ast - z^\ast$ is a minimizer of the original problem \eqref{eq:RealInvProb}. Thus, we can define a program for $\mathcal{B}$ which, on input $A \in \R^{m \times N}$, $y \in \R^m$, and $\varepsilon > 0$, yields the polynomials $q,r_i,s_i$ specified in \eqref{eq:pols}, uses \Cref{alg:GlOp} to obtain a minimizer $(w^\ast,z^\ast)$ of $f$, and finally outputs the solution $x^\ast = w^\ast - z^\ast$. This completes the proof.
\end{proof}

We immediately obtain the following theorem by applying \Cref{lemma:BSSReal}.
\begin{Theorem}\label{thm:BSSReal}
    There exists a BSS-computable function $\Xi^s \in \mathcal{M}_{\Xi^\R_{\text{bp},m,N}}$.
\end{Theorem}
\begin{Remark}
    We can conclude that inverse problems described by $\Xi^\R_{\text{bp},m,N}$ are algorithmically solvable in the BSS model. 
    The special case for fixed $\varepsilon>0$ is included in \Cref{thm:BSSReal}, i.e., there exists at least one BSS-computable function in $\mathcal{M}_{\Xi^\R_{\text{bp},m,N, \varepsilon}}$. In contrast, we already know that every function in $\mathcal{M}_{\Xi^\R_{\text{bp},m,N, \varepsilon}}$ is not computable on a Turing machine for $\varepsilon \in (0,1)$ \cite{Boche2022LimitsDL}. Hence, inverse problems 
    via basis pursuit minimization \eqref{eq:sparseprob} are algorithmically solvable on BSS machines but not on Turing machines.
\end{Remark}
\Cref{thm:BSSReal} implies that there exists an algorithm (in the BSS sense), which for every sampling operator $A$ and measurement $y$ yields a feasible reconstruction $x$. However, we did not specify which reconstruction (of all feasible reconstruction) is found by the algorithm. This is not a problem in practice as all feasible reconstructions are equally valid.

Now, we pose the question whether the algorithmic solvability of inverse problems in the BSS model is connected to specific properties of basis pursuit minimization, i.e., is algorithmic solvability valid for inverse problems with different regularization schemes as well? For this, we consider lasso minimization \eqref{eq:lasso}. It turns out that algorithmic solvability is in fact also achievable in the lasso formulation.

\begin{Lemma}\label{lemma:BSSReallasso}
    For $A \in \R^{m \times N}$, $y \in \R^m$, and $\lambda>0$ consider the multi-valued map
    \begin{align*} 
        \Psi : \R^{m\times N} \times \R^m \times \R_{\geq 0}  &\rightrightarrows \R^N \\
        (A,y,\varepsilon) &\mapsto \{x^\ast \in \R^N \mid  \norm[\ell_1]{x^\ast} = \min_{x \in \R^N} \lambda \norm[\ell_1]{x} +  \norm[\ell_2]{Ax-y}^2 \}.
    \end{align*}    
    Then, there exists a (single-valued) function $\Psi^s \in \mathcal{M}_\Psi$ that is BSS-computable.
\end{Lemma}
\begin{proof}
We need to show that there exists a program for a BSS machine $\mathcal{B}$ so that its input-output map $\Phi_\mathcal{B}$ is equivalent to a function $\Psi^s \in \mathcal{M}_\Psi$. Given $A \in \R^{m \times N}$, $y \in \R^m$, and $\varepsilon > 0$ we therefore have to compute $x^\ast$ with   
\begin{equation}\label{eq:RealInvProblasso}
    x^\ast \in \argmin_{x \in \R^N} \lambda \norm[\ell_1]{x} +  \norm[\ell_2]{Ax-y}^2.
\end{equation}
For this, let $A = (a_{i,j})$ and note that
\begin{align*}
    \lambda \norm[\ell_1]{x} +  \norm[\ell_2]{Ax-y}^2 &= \lambda \sum_i \abs{x_i} + \sum_{i=1}^m\big(\sum_{j=1}^N a_{i,j}x_j -y_i \big)^2    \\
    &= \lambda \sum_i x_i^+ + x_i^- + \sum_{i=1}^m\big(\sum_{j=1}^N a_{i,j}(\sum_i x_i^+ - x_i^-) -y_i \big)^2
\end{align*}
where $x_i^+ = \max\{ 0, x_i\}$ and $x_i^- = - \min\{ 0,x_i\}$. This implies that a solution of \eqref{eq:RealInvProblasso} can be obtained via
\begin{equation}\label{eq:optlassosol}
   (w^\ast,z^\ast)\in \argmin_{\substack{w,z \in \R^N,\\ w,z \geq 0}} \lambda \sum_i w_i^\ast + z_i^\ast + \sum_{i=1}^m\big(\sum_{j=1}^N a_{i,j}(\sum_i w_i^\ast - z_i^\ast) -y_i \big)^2.   
\end{equation}
as $x^\ast =  w^\ast - z^\ast$. Now, consider the polynomial $f: \R^{2N} \to \R$ given by
\begin{equation*}
     f(w,z) = \lambda \sum_i w_i + z_i + \sum_{i=1}^m\big(\sum_{j=1}^N a_{i,j}(\sum_i w_i - z_i) -y_i \big)^2
\end{equation*}
and the semialgebraic set with a quantifier free description
\begin{align*}
    &\{(w,z)\in\R^{2N} \mid \, w\geq 0, z\geq 0 \}\nonumber \\
    &=\{(w,z)\in\R^{2N} \bigwedge_{i=1}^N  r_i(w,z) \geq 0 \wedge \bigwedge_{i=1}^N s_i(w,z) \geq 0 \}, 
\end{align*} 
where $r_i,s_i: \R^{2N} \to \R$, $r_i(w,z) = w_i, s_i(w,z)= z_i$ are polynomials describing the non-negativity conditions. We can apply the optimization algorithm, \Cref{alg:GlOp}, to find a minimizer $(w^\ast,z^\ast) \in\R^{2N}$ of $f$ on the semialgebraic set specified by the polynomials $r_i,s_i$. By construction $(w^\ast,z^\ast)$ is a solution of \eqref{eq:optlassosol} and the difference $x^\ast =  w^\ast - z^\ast$ is a sought solution of \eqref{eq:RealInvProblasso}. Consequently, there exists a program for $\mathcal{B}$ which, on input $A \in \R^{m \times N}$, $y \in \R^m$, and $\lambda > 0$, yields the polynomials $r_i,s_i$ and uses \Cref{alg:GlOp} to obtain a minimizer $(w^\ast,z^\ast)$ of $f$, and finally outputs the solution $x^\ast$. This completes the proof.
\end{proof}
Now, the computability in BSS sense of lasso minimization is immediate via \Cref{lemma:BSSReallasso}.
\begin{Theorem}\label{thm:BSSReallasso}
    There exists a BSS-computable function $\Xi^s \in \mathcal{M}_{\Xi^\R_{\text{la},m,N}}$.
\end{Theorem}
\begin{Remark}
    We can infer that inverse problems described by $\Xi^\R_{\text{la},m,N}$ are algorithmically solvable in the BSS model. Again, the special case for fixed optimization parameter $\lambda>0$ is also covered by the statement of \Cref{thm:BSSReallasso}. In \cite{Boche2022LimitsDL}, it is shown that inverse problems via square root lasso optimization are not algorithmically solvable on Turing machines. Square root lasso optimization is defined as 
    \begin{equation}\label{eq:srlasso}
        \argmin_{x \in \C^N} \lambda \norm[\ell_1]{x} + \norm[\ell_2]{Ax -y}, \tag{sla}
    \end{equation}
    i.e., the only difference to lasso \eqref{eq:lasso} is the power of the $\ell_2$ norm. Therefore, the analogous proof technique can be applied to derive algorithmic non-solvability of lasso minimization. Thus, we also obtain different capabilities of Turing and BSS machines for solving inverse problems via lasso minimization. However, square root lasso optimization can not be directly tackled on BSS machines due to the non-computability of the square root function connected to the $\ell_2$ norm. Hence, similar modifications as described in \Cref{sec:complex} for basis pursuit and lasso need to be applied to circumvent this problem.    
\end{Remark}

In summary, there exists a clear distinction between algorithmic solvability of inverse problems restricted to the real domain on Turing and BSS machines. Whereas Turing machines do not allow for algorithmic solvability of inverse problems \cite{Boche2022LimitsDL}, in the BSS model we can prove algorithmic solvability for various underlying optimization formulations as basis pursuit and lasso. Hence, the limitations arising in the Turing model, do not generally occur in the BSS model for solving inverse problems algorithmically.

\subsection{Complex Case}\label{sec:complex}
\noindent
The study of computability is more evolved in the complex domain. The reason being that BSS machines over the complex field have different properties than BSS machines over the real numbers. Complex BSS machines treat complex numbers as entities whereas real BSS machines rely on real numbers as basic entities. One main difference arises from the fact that complex numbers are not an ordered field. Thus, complex BSS machines can not compare arbitrary complex numbers but only check the equality to zero at their branch nodes. Moreover, even basic functions connected to complex numbers such as $z \mapsto \Re(z)$, $z \mapsto \Im(z)$, $z \mapsto \bar{z}$, and $z \mapsto \abs{z}$ are not BSS-computable which follows from the fact that $\R$ is not a BSS-decidable set in $\C$ \cite{Blum98ComplRealComp}. Therefore, $\ell_p$ norms are not BSS-computable on complex inputs as they require to calculate the absolute value of the inputs. A possible work-around is to represent $\C$ by $\R^2$ and apply operations on real numbers. That is, we basically consider the real model and employ real BSS machines that take complex inputs $x$ in form of $(\Re(x), \Im(x))$. Then, the functions $z \mapsto \Re(z)$ and $z \mapsto \Im(z)$ are computable on a real BSS machine since it only requires the machine to process the respective part of the representation of $z$. Hence, the representation of the complex field influences the capabilities of the corresponding BSS machine. Unfortunately, this is still not sufficient to guarantee computability of the the $\ell_1$ norm
\begin{equation*}
    \norm[\ell_1]{x} = \sum_i \abs{x_i} = \sum_i \sqrt{\Re(x_i)^2 + \Im(x_i)^2}
\end{equation*}
on real BSS machines, since the square root function $\sqrt{\cdot}$ is not BSS-computable on $\R$. 
We wish to remark that the non-computability of the square root function on BSS machines assumes exact computations, which we consider throughout this work. By relaxing to approximate computations the non-computability on BSS machines may be avoided. Similarly, on Turing machines the square root function can also not be computed exactly, even on a discrete set such as $\N$ since the irrational number $\sqrt{2}$ can not be represented exactly on a Turing machine. However, the square root function can be approximated to arbitrary degree on Turing machines, i.e., it is a Turing-computable function on $\R$. 

Nevertheless, we can observe that slight modifications suffice to obtain a real BSS-computable function. Replacing the $\ell_1$ by the squared $\ell_2$ norm yields a BSS-computable function, since

\begin{equation*}
    \norm[\ell_2]{x}^2 = \sum_i \abs{x_i}^2 = \sum_i \Re(x_i)^2 + \Im(x_i)^2
\end{equation*}
can be expressed by elementary field operations and the real BSS-computable functions $\Re$ and $\Im$. This is in fact no exception, similar arguments can be made for any $\ell_p$ norm when $p$ is even. 

In general, the basis pursuit formulation in \eqref{eq:sparseprob} promotes sparse solutions due to the $\ell_1$ norm in the objective function. This property is neglected when replacing the $\ell_1$ with the squared $\ell_2$ norm in the objective. Hence, a potentially BSS-computable substitute in the objective ideally remains close to the original objective function so that desired properties are maintained. Therefore, we introduce a norm $\norm[\ast]{\cdot}$ which satisfies these requirements and at the same time emphasizes the difference between the capabilities of BSS machines and Turing machines. We define $\norm[\ast]{\cdot}$ on $\C^N$ by
\begin{equation*}
    \norm[\ast]{x} \coloneqq \sum_i^N \abs{\Re(x_i)} + \abs{\Im(x_i)}
\end{equation*}
Next, we show that by changing the objective in basis pursuit \eqref{eq:sparseprob} from $\ell_1$ norm to $\norm[\ast]{\cdot}$ we obtain an algorithmically solvable optimization problem.

\begin{Lemma}\label{lemma:BSSComplex}
    For $A \in \C^{m \times N}$, $y \in \C^m$, and $\varepsilon>0$ set $S_{(A,y,\varepsilon)} \coloneqq \{ x \in \C^N \mid \norm[\ell_2]{Ax -y} \leq \varepsilon\}$ and consider the multi-valued map
    \begin{align*} 
        \Psi : \C^{m\times N} \times \C^m \times \R_{\geq 0}  &\rightrightarrows \C^N \\
        (A,y,\varepsilon) &\mapsto \{x^\ast \in \C^N \mid  \norm[\ast]{x^\ast} = \min_{x \in S_{(A,y,\varepsilon)}} \norm[\ast]{x} \text{ and } \norm[\ell_2]{Ax^\ast -y} \leq \varepsilon \}.
    \end{align*}    
    Then, there exists a (single-valued) function $\Psi^s \in \mathcal{M}_\Psi$ that is BSS-computable.
\end{Lemma}

\begin{Remark}
    We employ BSS machines over the real numbers by representing complex inputs via their real and imaginary parts as explained in the previous paragraph. For other representations of complex numbers and the corresponding BSS machines, e.g., complex BSS machines, the statement does not hold.  
\end{Remark}

\begin{proof}
We need to show that there exists a program for a real BSS machine $\mathcal{B}$ so that its input-output map $\Phi_\mathcal{B}$ is equivalent to a function $\Psi^s \in \mathcal{M}_\Psi$. Given $A \in \C^{m \times N}$, $y \in \C^m$, and $\varepsilon > 0$, whereby each complex element is represented by its real and imaginary part, we have to compute $x^\ast$ with   
\begin{equation}\label{eq:ComplInvProb}
    x^\ast \in \argmin_{x \in \C^N} \norm[\ast]{x} \text{ such that } \norm[\ell_2]{Ax -y} \leq \varepsilon.
\end{equation}
Let $A = (a_{i,j})$ and note that
\begin{equation}\label{eq:obs1}
    \norm[\ell_2]{Ax -y} \leq \varepsilon \quad \Leftrightarrow \quad \norm[\ell_2]{Ax -y}^2 \leq \varepsilon^2 \quad \Leftrightarrow \quad \sum_{i=1}^m\left(\abs{\sum_{j=1}^N a_{i,j}x_j -y_i}\right)^2 -\varepsilon^2 \leq 0  
\end{equation}
and for fixed $i\in \{1,\dots,m\}$
\begin{align}\label{eq:obs2}
    \left(\abs{\sum_{j=1}^N a_{i,j}x_j -y_i}\right)^2 &= \left(\Re\left(\sum_{j=1}^N a_{i,j}x_j -y_i\right)\right)^2 + \left(\Im\left(\sum_{j=1}^N a_{i,j}x_j -y_i\right)\right)^2\nonumber\\
    &= \left(\sum_{j=1}^N \Re(a_{i,j})\Re(x_j) - \Im(a_{i,j})\Im(x_j) - \Re(y_i) \right)^2 + \nonumber\\
    &\phantom{vvvvvvvvvv} \left(\sum_{j=1}^N \Re(a_{i,j})\Im(x_j) + \Im(a_{i,j})\Re(x_j) - \Im(y_i) \right)^2.
\end{align}
Combining \eqref{eq:obs1} and \eqref{eq:obs2} yields
\begin{align*}
    \norm[\ell_2]{Ax -y} \leq \varepsilon \quad \Leftrightarrow \quad \sum_{i=1}^m &\left(\sum_{j=1}^N \Re(a_{i,j})\Re(x_j) - \Im(a_{i,j})\Im(x_j) - \Re(y_i) \right)^2\\
    &+ \left(\sum_{j=1}^N \Re(a_{i,j})\Im(x_j) + \Im(a_{i,j})\Re(x_j) - \Im(y_i) \right)^2 -\varepsilon^2 \leq 0,     
\end{align*}
i.e., we can identify the set $\{x \in \C^N | \norm[\ell_2]{Ax -y} \leq \varepsilon \}$ with $\{ x \in \R^{2N} | q(x) \leq 0\}$ where $q: \R^{2N} \to \R$ is a polynomial. Hence, $\{x \in \C^N | \norm[\ell_2]{Ax -y} \leq \varepsilon \}$ can be represented by a (real) semialgebraic set with a quantifier free description. In order to use the optimization algorithm, \Cref{alg:GlOp}, it remains to show that the objective function in \eqref{eq:ComplInvProb} is a real polynomial $f: \R^{2N} \to \R$.

However, the objective function $\norm[\ast]{\cdot}$ is not a polynomial (due to the absolute values in its definition). By applying the same workaround as in the proof of \Cref{lemma:BSSReal} --- rewriting the absolute value as the sum of two variables --- we obtain that 
\begin{equation*}
    \norm[\ast]{x} = \sum_i  \abs{\Re(x_i)} + \abs{\Im(x_i)} = \sum_i  \Re(x_i)^+ + \Re(x_i)^- + \Im(x_i)^+ + \Im(x_i)^-,
\end{equation*}
where $x_i^+ = \max\{ 0, x_i\}$ and $x_i^- = - \min\{ 0,x_i\}$. Thus, $\norm[\ast]{\cdot}$ can be represented by a polynomial
$p: \R^{4N} \to \R$,
\begin{equation*}
    p(\Re(x_i)^+,\Re(x_i)^-,\Im(x_i)^+,\Im(x_i)^-) = \sum_i {\Re(x_i)^+ + \Re(x_i)^- + \Im(x_i)^+ + \Im(x_i)^-}.    
\end{equation*}
Therefore, the same reasoning as in the real case shows that a minimizer $\hat{x} \in \R^{2N}$ of $\norm[\ast]{\cdot}$ on $\{ x \in \R^{2N} | q(x) \leq 0\}$ can be algorithmically computed. The sought minimizer $x^\ast \in \C^N$ of \eqref{eq:ComplInvProb} is then simply $\hat{x}$, whereby the elements in $\hat{x}$ are considered as the real and imaginary parts of $x^\ast$. 
\end{proof}

We immediately obtain the following theorem by applying \Cref{lemma:BSSComplex}.
\begin{Theorem}\label{thm:BSSComplex}
    For $A\in\C^{m\times n}$, $y\in \C^m$ and $\varepsilon >0$ consider the optimization problem 
    \begin{equation}\label{eq:adjProb}
        \argmin_{x \in \C^N} \norm[\ast]{x} \text{ such that } \norm[\ell_2]{Ax -y} \leq \varepsilon. \tag{$\text{bp}^\ast$}
    \end{equation}
    Then, there exists a BSS-computable function $\Xi^s \in \mathcal{M}_{\Xi_{\text{bp}^\ast,m,N}}$.   
\end{Theorem}
\begin{Remark}
    Again, the special case for fixed $\varepsilon>0$ is included in \Cref{thm:BSSComplex}, i.e., there also exists a BSS-computable function $\Xi^s \in \mathcal{M}_{\Xi_{\text{bp}^\ast,m,N,\varepsilon}}$. 
\end{Remark}
With the same adjustment, i.e., replacing the $\norm[\ell_1]{\cdot}$ by $\norm[\ast]{\cdot}$, and the analogous proof technique we obtain a similar result for lasso optimization \eqref{eq:lasso}.
\begin{Theorem}\label{thm:BSSComplexlasso}
    For $A\in\C^{m\times n}$, $y\in \C^m$ and $\lambda >0$ consider the optimization problem 
    \begin{equation}\label{eq:adjProblasso}
        \argmin_{x \in \C^N} \lambda \norm[\ast]{x} + \norm[\ell_2]{Ax -y}^2. \tag{$\text{la}^\ast$}
    \end{equation}
    Then, there exists a BSS-computable function $\Xi^s \in \mathcal{M}_{\Xi_{\text{la}^\ast,m,N}}$.  
\end{Theorem}
\begin{Remark}
    We can conclude that inverse problems expressed through the optimization problems \eqref{eq:adjProb} and \eqref{eq:adjProblasso} with the modified objective $\norm[\ast]{\cdot}$ are algorithmically solvable in the BSS model. Hence, there exists algorithms in the BSS sense that for every sampling operator $A$ and measurement $y$ yield a feasible reconstruction $x \in \Xi_{\text{bp}^\ast,m,N}$ and $x \in \Xi_{\text{la}^\ast,m,N}$, respectively.
\end{Remark}
Several questions concerning our results arise. In particular, it is desirable to relate our findings to the algorithmic solvability of inverse problems on Turing machines.
\begin{itemize}
    \item First, how do the modifications of the optimization problems influence the outcomes in the Turing model? As already mentioned, basis pursuit and lasso optimization are not algorithmically solvable on Turing machines \cite{Boche2022LimitsDL}. Applying the same proof technique also yields algorithmic non-solvability in the Turing model for the adjusted description in \eqref{eq:adjProb} and \eqref{eq:adjProblasso}. We demonstrate the proof technique in \Cref{sec:app} for \eqref{eq:adjProb}.
    \item Second, can we find similar modifications of the optimization problems which allow to solve inverse problems successfully on a Turing machine? Here, the answer appears to be negative. The reasoning for algorithmic non-solvability of basis pursuit and lasso in the Turing model was in principle not connected to the utilized objective function. Rather, non-computability conditions were established that are fairly independent of the objective or, more specifically, can be easily adapted to certain changes in the objective function. The requirements to apply the non-computability conditions are mainly connected to the properties of the solution set of a given problem. If the solution set satisfies certain properties, then the non-computability statement can also be extended to the corresponding problem formulation. Hence, as long as the modification of the objective function does not promote substantial changes in the solution set the non-computability statement remains valid. Moreover, changing the objective entirely may result in a quality loss in the computed solutions since the desired properties such as sparsity may be no longer encourages by the new objective.
    \item Therefore, our final questions directly relates to the previous discussion: Can we justify the changes in the objective function resulting in the optimization problems \eqref{eq:adjProb} and \eqref{eq:adjProblasso}? In particular, how do  \eqref{eq:adjProb} and \eqref{eq:adjProblasso} relate to the original problems in \eqref{eq:sparseprob} and \eqref{eq:lasso}, respectively? Although these adjustments would typically not be applied to solve inverse problems in practice, the approach tries to maintain the underlying properties of the original formulations. Hence, it is a striking result that there exists fairly similar descriptions of the inverse problem setting that can be successfully solved in the BSS model. 
\end{itemize}
Next, we want to nevertheless establish the modifications of the optimization problems more rigorously without introducing an appropriate but to a certain degree arbitrary norm. Therefore, our goal is to approximate the $\ell_1$ norm adequately instead of replacing it. We exemplify this approach on the basis pursuit problem. The proof is provided in \Cref{sec:appb}.
\begin{Theorem}\label{thm:BPapprox}
    Let $\beta, \gamma > 0$. For $A\in\C^{m\times N}$, $y\in \C^m$ and $\varepsilon >0$ consider the optimization problem 
    \begin{equation}\label{eq:approxbp}
        \argmin_{x \in \C^N} p_{\beta, \gamma}(x)  \text{ such that }   \norm[\ell_2]{Ax -y} \leq \varepsilon \,\, \wedge \,\, \norm[\ell_2]{x} < \sqrt{N} \beta,  \tag{$\text{p}(\beta, \gamma)$}
    \end{equation}
    where $p_{\beta, \gamma}$ is a polynomial satisfying 
    \begin{equation}\label{eq:approxl1}
        \sup_{x \in \C^N:  \norm[\ell_2]{x} < \sqrt{N} \beta}\abs{\norm[\ell_1]{x} - p_{\beta, \gamma}(x)} \leq \gamma.
    \end{equation}
    Then, there exists a BSS-computable function $\Xi^s \in \mathcal{M}_{\Xi_{\text{p}(\beta,\gamma),m,N}}$. Moreover, there also exists a BSS-computable function $g: \C^{m\times N} \times \C^m \times \R_{>0} \times \R_{>0}\to \{0,1\}$ so that $g(A,y, \varepsilon,\beta) = 1$ is equivalent to the following statement: For $(A,y, \varepsilon)$ there exists at least one solution of basis pursuit \eqref{eq:sparseprob} and the solution(s) are contained in $I_\beta \coloneqq \{x \in \C^N: \norm[\ell_2]{x} < \sqrt{N} \beta\}$.    
\end{Theorem}
\begin{Remark}
    We can conclude that inverse problems described by $\Xi_{\text{p}(\beta,\gamma),m,N}$ are algorithmically solvable in the BSS model. Note that the objective $p_{\beta, \gamma}$ in \eqref{eq:approxbp} approximates up to an error of $\gamma$ the $\ell_1$ norm, i.e., the objective of the original basis pursuit optimization, on the set $I_\beta$. Therefore, \eqref{eq:approxbp} represents an approximation of basis pursuit if its minimizers are contained in $I_\beta$. This property can be checked via the function $g$ in \Cref{thm:BPapprox}, i.e., before evaluating $\Xi^s \in \mathcal{M}_{\Xi_{\text{p}(\beta,\gamma),m,N}}$ for given $(A,y,\varepsilon)$ we can algorithmically verify if $\Xi_{\text{bp},m,N}(A,y,\varepsilon) \subset I_\beta$. However, the minimizers of \eqref{eq:approxbp} and basis pursuit need not to agree and we do not obtain worst-case bounds on their distance.   
\end{Remark}
\begin{Remark}
    The algorithmic solvability of $\Xi_{\text{p}(\beta,\gamma),m,N}$ and the existence of $g$ imply that there exists a BSS machine $\mathcal{B}_{\beta,\gamma}$ that computes a solution of \eqref{eq:approxbp} given $A\in\C^{m\times n}$, $y\in \C^m$ and $\varepsilon>0$ as input, provided that $(A,y,\varepsilon)$ satisfy certain properties. In particular, $\mathcal{B}_{\beta,\gamma}$ first checks if the solutions of basis pursuit for $(A,y,\varepsilon)$ are contained in $I_\beta$. In this way, we can ensure that the approximation of the the $\ell_1$ norm by $p_{\beta, \gamma}$ is valid on the relevant domain. Otherwise, the computation is aborted. If not, a solution of \eqref{eq:approxbp} for $(A,y,\varepsilon)$ is computed. Note that $\mathcal{B}_{\beta,\gamma}$ needs to be constructed for fixed approximation accuracy $\gamma$ and fixed acceptance domain depending on $I_\beta$, i.e., $\beta$ and $\gamma$ are not part of the input to the machine. Therefore, a change of the acceptance domain and approximation accuracy entails a new construction of the associated BSS machines $\mathcal{B}_{\beta,\gamma}$. In particular, we need to encode a finite set of constants, which depend on $\beta$ and $\gamma$ but can not be computed by a BSS machine directly. The constants enable $\mathcal{B}_{\beta,\gamma}$ to compute an appropriate polynomial $p_{\beta,\gamma}$ and subsequently a minimizer of \eqref{eq:approxbp}.
\end{Remark}
\begin{Remark}
    In the Turing model, the outlined approach to approximate basis pursuit is not feasible; we refer to \Cref{sec:appb} for details. Nevertheless, different approximation schemes may exist Turing machines. However, we have lower bounds on how accurate an algorithm executed on a Turing machine can be \cite{Boche2022LimitsDL}, i.e., how close an algorithm on a Turing machine can approximate the minimizers of basis pursuit.
\end{Remark}

\section{Discussion}\label{sec:Discussion}
\noindent
Our findings have noteworthy implications concerning the solvers of finite-dimensional inverse problems. In particular, they allow to characterize the boundaries of any general algorithm applied to solve these inverse problems. More precisely, we examined the limits of any algorithm implemented on a specific computing devices imposed by the underlying hardware, in particular, any solver implemented on the given computing device is subject to these restrictions.
On digital hardware, which is the currently predominant hardware platform, the algorithmic solvability via basis pursuit \eqref{eq:sparseprob} and lasso \eqref{eq:lasso} is limited \cite{Boche2022LimitsDL}. On analog hardware, represented by the mathematical model of BSS machines, our findings are more favourable. First, algorithmic solvability on BSS machines of inverse problems can be established when restricting from the complex to the real domain. Furthermore, we can find modifications \eqref{eq:adjProb}, \eqref{eq:adjProblasso}, and \eqref{eq:approxbp} that allow for algorithmic solvability on BSS machines while staying `close' to the original basis pursuit and lasso problem. At the same time, the algorithmic non-solvability on Turing machines, i.e., digital computers, still persists.  Hence, the utilized computing device has tremendous influence on the existence and capabilities of solvers applied to inverse problems. 

What are the effects of these results on current solution techniques? First, today's solvers are in general implemented on digital hardware. Thus, they are subject to the limitations of digital hardware and even improving the solvers will not suffice to circumvent the imposed boundaries. Therefore, DL on digital hardware as the premium approach to solve inverse problems is subject to these limitations. This imposes restrictions on the reliability and trustworthiness of DL on digital hardware for solving inverse problems. In particular, a general and reliable DL solver for inverse problems can not exist if formal proof of correctness or a certificate for the output is expected.  

Second, future developments in hardware technology (as described in \Cref{seq:AnaDigComp}) to analog hardware may enable more powerful solvers. Here, the question is whether the BSS model is an appropriate model to describe the capabilities of the upcoming analog hardware solution. Although a BSS machine is certainly an idealized model, due to exact real number processing, an important point is whether and to what degree these models can be implemented and approximated. In particular, the effects of noisy computations need to be evaluated. For now, we can conclude that in an idealized theoretical setting analog hardware (modeled by BSS machines) offers a platform for potentially reliable DL.   

\section*{Acknowledgements}
This work of H. Boche was supported in part by the German Federal Ministry of Education and Research (BMBF) in the project Hardware Platforms and Computing Models for Neuromorphic Computing (NeuroCM) under Grant 16ME0442 and within the national initiative on 6G Communication Systems through the research hub 6G-life under Grant 16KISK002.

G. Kutyniok acknowledges support from LMUexcellent, funded by the Federal Ministry of Education and Research (BMBF) and the Free State of Bavaria under the Excellence Strategy of the Federal Government and the Länder as well as by the Hightech Agenda Bavaria. Further, G. Kutyniok was supported in part by the DAAD programme Konrad Zuse Schools of Excellence in Artificial Intelligence, sponsored by the Federal Ministry of Education and Research. G. Kutyniok also acknowledges support from the Munich Center for Machine Learning (MCML) as well as the German Research Foundation under Grants DFG-SPP-2298, KU 1446/31-1 and KU 1446/32-1 and under Grant DFG-SFB/TR 109, Project C09 and the German Federal Ministry of Education and Research (BMBF) under Grant MaGriDo.

\printbibliography

\appendix

\section{Algorithmic Non-Solvability in the Turing model}\label{sec:app}
\noindent
In this section, we show that inverse problems expressed through the optimization problem 
\begin{equation}\label{eq:AppProb}
    \argmin_{x \in \C^N} \norm[\ast]{x} \text{ such that } \norm[\ell_2]{Ax -y} \leq \varepsilon, \tag{$\text{bp}^\ast$}
\end{equation}
are not algorithmically solvable in the Turing model, where the norm $\norm[\ast]{\cdot}$ is given by
\begin{equation*}
    \norm[\ast]{x} \coloneqq \sum_i^N \abs{\Re(x_i)} + \abs{\Im(x_i)}.
\end{equation*}
In particular, the following result will be proven.
\begin{Theorem}\label{thm:bpAstNonSolv}
    For $\varepsilon \in (0,1)$, the problem described by $\Xi_{\text{bp}^\ast,m,N,\varepsilon}$ is not algorithmically solvable on a Turing machine.
\end{Theorem}
First, we introduce the necessary notions describing Turing-computable functions. 

\subsection{Preliminaries from Computation Theory}
\noindent
For a detailed presentation of the topic, we refer to \cite{Soare87RecursivelyES, Weihrauch00CompAnal, Pour-El17Computability, AvigadBrattka14CompAnal}. Here, we only give a very short overview, starting with the computability of (real) numbers on Turing machines.
\begin{Definition}
    A sequence $(r_k)_{k \in \N}$ of \textit{rational numbers} is \textit{computable}, if there exist three recursive functions $a, b, s : \N \to \N$ such that $b(k) \neq 0$ for all $k \in \N$ and
    \begin{equation*} 
        r_k = (-1)^{s(k)} \frac{a(k)}{b(k)} \qquad \text{ for all } k \in \N. 
    \end{equation*}
\end{Definition}

\begin{Definition}
\begin{enumerate}[(1)]
    \item[]
    \item A number $x \in \R$ is computable if there exists a computable sequence of rational numbers $(r_k)_{k \in \N}$ such that
    \begin{equation*} 
        \abs{r_k - x} \leq 2^{-k} \qquad \text{ for all } k \in\N.
    \end{equation*}
    \item A \textit{vector} $v \in \R^n$ is \textit{computable} if each of its components is computable.
    \item A sequence $(x_n)_{n\in \N} \subset \R$ is  \textit{computable} if there exists a computable double-indexed sequence $\{r_{n,k}\}_{n,k \in \N} \subset \Q$ such that
    \begin{equation*} 
        \abs{r_{n,k} - x_n} \leq 2^{-k} \qquad \text{ for all } k, n \in\N.
    \end{equation*}
\end{enumerate}
\end{Definition}
Next, we will define the computability of a function via Banach-Mazur computability. This is in fact the weakest form of computability on a Turing machine in our setting, i.e., if a function is not Banach–Mazur computable, then it is not computable with respect to any other reasonable notion of computability on a Turing machine.
\begin{Definition}[Banach-Mazur computability]
    A function $f: I \to \R^n_c$, $I \subset \R^m_c$, is said to be \textit{Banach-Mazur computable}, if $f$ maps computable sequences $(t_n)_{n\in\N}\subset I$ onto computable sequences $(f(t_n))_{n\in\N}\subset \R^n_c$. 
\end{Definition}

\subsubsection{Proof Sketch}
\noindent
To prove algorithmic non-solvability in \Cref{thm:bpAstNonSolv}, we will apply certain non-computability conditions for optimization problems which were established in \cite{Boche2022LimitsDL}. 
\begin{Lemma}\label{lemma:generalNonApprox}
    Consider for the optimization problem $P$ with optimization parameter $\mu>0$ the multi-valued mapping $\Xi_{P,m,N,\mu}$ defined by
    \begin{align*}
        \Xi_{P,m,N,\mu}: \C^{m\times N} \times \C^m &\rightrightarrows \C^N \\  
        (A,y) &\mapsto P(A,y,\mu).\nonumber
    \end{align*}
    Choose an arbitrary single-valued restriction $\Psi^s \in \mathcal{M}_{\Xi_{P,m,N,\mu}}$ and $\Omega \subseteq \dom(\Psi^s) = \{\omega =(A,y) \in \C^{m\times N} \times \C^m\} \,\vert\, P(\omega,\mu) \neq \emptyset \}$. Further, suppose that there are two computable sequences $(\omega_{n}^1)_{n\in \N}, (\omega_{n}^2)_{n\in \N} \subset \Omega$ satisfying the following conditions:
    \begin{enumerate}[(a)]
        \item There are sets $S^1, S^2 \subset \C^N$ and $\kappa > 0, \kappa \in \Q$ such that $\inf_{x_1 \in S^1, x_2\in S^2} \norm[\ell_2]{x_1 - x_2} > \kappa$ and $\Psi^s(\omega_{n}^j) \in S^j$ for $j = 1, 2$.
        \item There exists $\omega^\ast \in \Omega$ such that $\norm[\ell_2]{\omega_{n}^j - \omega^\ast} \leq 2^{-n}$ for all $n \in \N$, $j = 1, 2$. 
    \end{enumerate}
    In addition, let $\Psi: \Omega_\Psi \to \C^N$, $\Omega \subset \Omega_\Psi \subset \C^{m\times N} \times \C^m$, be an arbitrary function with
    \begin{equation*}
        \sup_{\omega \in \Omega} \norm[\ell_2]{\Psi^s(\omega) - \Psi(\omega)} < \frac{\kappa}{8}.
    \end{equation*}
    Then $\Psi$ is not Banach-Mazur computable.
\end{Lemma}
Verifying the assumptions of \Cref{lemma:generalNonApprox} for $\varepsilon\in (0,1)$ is sufficient to obtain \Cref{thm:bpAstNonSolv}. We will only sketch the approach to demonstrate algorithmic non-solvability in the Turing model, however, a more thorough analysis also yields algorithmic non-approximability for a certain range of optimization parameter. 
The construction of the sequences satisfying conditions $(a)$ and $(b)$ in \Cref{lemma:generalNonApprox} relies on a characterization of the solution set. The same properties of the solution set as in the original basis pursuit problem can be derived by adapting the proof technique in \cite[Appendix]{colbrook21stable}) to the \eqref{eq:AppProb} formulation.
\begin{Lemma}\label{lm:pt2}
    Let $N \geq 2$ and consider \eqref{eq:AppProb} for
    \begin{equation*}
        A = \begin{pmatrix} a_1 & a_2 & \dots & a_N \end{pmatrix} \in \C^{1 \times N}, \quad y=1, \quad \varepsilon \in [0,1),
    \end{equation*}
    where $a_j > 0$ for $j=1,\dots,N$. Then the set of solutions of $\text{bp}^\ast$ is given by
    \begin{align*}
        \sum_{j=1}^{N} \left(t_j(1-\varepsilon) a_j^{-1} \right) e_j, \quad  s.t. \quad  &t_j \in [0,1], \quad \sum_{j=1}^{N} t_j = 1,\\
        &\text{ and } t_j=0 \textit{ if } a_j < \max_k a_k,    
    \end{align*}
    where $\{e_j\}_{j=1}^N$ is the canonical basis of $\C^N$.
\end{Lemma}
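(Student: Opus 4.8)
The plan is to exploit the very special structure of this instance---a single row $A=(a_1,\dots,a_N)$ with strictly positive real entries and real right-hand side $y=1$---to reduce the complex problem \eqref{eq:AppProb} to an elementary real $\ell^1$-minimization, and then to read off the entire solution set from a single chain of inequalities. Writing $x_j = \Re(x_j) + i\,\Im(x_j)$, the objective splits as $\norm[\ast]{x} = \sum_j \abs{\Re(x_j)} + \sum_j \abs{\Im(x_j)}$, while the constraint reads $\norm[\ell^2]{Ax-y}^2 = \bigl(\sum_j a_j \Re(x_j) - 1\bigr)^2 + \bigl(\sum_j a_j \Im(x_j)\bigr)^2 \leq \varepsilon^2$. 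The first step is to argue that every minimizer is real: given any feasible $x$, replacing each $\Im(x_j)$ by $0$ leaves the first squared term unchanged and can only decrease the second, hence preserves feasibility, while it strictly decreases $\norm[\ast]{\cdot}$ unless all the $\Im(x_j)$ already vanish. Thus any optimal $x$ satisfies $\Im(x_j)=0$, and the problem collapses to minimizing $\sum_j \abs{u_j}$ over real $u=(\Re(x_j))_j$ subject to $1-\varepsilon \leq \sum_j a_j u_j \leq 1+\varepsilon$ (here $a_j>0$ and $\varepsilon<1$ give $1-\varepsilon>0$).

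Next I would establish the sharp lower bound on the objective. Setting $a_{\max} \coloneqq \max_k a_k$, for any feasible real $u$ I would invoke the chain
\begin{equation*}
    1-\varepsilon \;\leq\; \sum_{j=1}^N a_j u_j \;\leq\; \sum_{j=1}^N a_j \abs{u_j} \;\leq\; a_{\max}\sum_{j=1}^N \abs{u_j},
\end{equation*}
which immediately yields $\norm[\ell^1]{u} \geq (1-\varepsilon)a_{\max}^{-1}$. This bound is attained by concentrating all mass on a single index $j_0$ with $a_{j_0}=a_{\max}$, namely $u = (1-\varepsilon)a_{\max}^{-1} e_{j_0}$, which is feasible since then $\sum_j a_j u_j = 1-\varepsilon \in [1-\varepsilon,1+\varepsilon]$. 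Hence the optimal value equals $(1-\varepsilon)a_{\max}^{-1}$.

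Finally I would characterize the full solution set by tracking the equality cases in the above chain. A feasible $u$ is optimal exactly when all three inequalities are equalities: the first forces $\sum_j a_j u_j = 1-\varepsilon$ (so the upper constraint is always inactive at an optimum); the second, since $a_j(\abs{u_j}-u_j)\geq 0$ termwise, forces $u_j \geq 0$ for every $j$; and the third, rewritten as $\sum_j (a_{\max}-a_j)\abs{u_j}=0$ with nonnegative summands, forces $u_j = 0$ whenever $a_j < a_{\max}$. Conversely, any $u$ meeting these three conditions attains the optimum. Reparametrizing by $t_j \coloneqq a_j u_j (1-\varepsilon)^{-1}$ then gives $t_j \in [0,1]$, $\sum_j t_j = 1$, $t_j=0$ if $a_j<a_{\max}$, and $u_j = t_j(1-\varepsilon)a_j^{-1}$, which is precisely the claimed form. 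I expect the main subtlety to lie not in any individual computation but in the bookkeeping needed to pin down \emph{all} minimizers at once: one must check that each equality condition is simultaneously necessary and sufficient, and confirm that the boundary case $\varepsilon=0$ (where the constraint degenerates to the single equality $Ax=1$ and, for the imaginary parts, $\sum_j a_j \Im(x_j)=0$) is subsumed by the same argument.
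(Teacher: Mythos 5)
Your proof is correct. It arrives at the same characterization as the paper's (commented-out) proof but by a different mechanism: the paper substitutes $\hat{x}_j = a_j x_j$ to turn the problem into minimizing a weighted norm over $\{\hat{x} : \abs{1 - \sum_j \hat{x}_j} \leq \varepsilon\}$, invokes convexity to place minimizers on the face $\sum_j \hat{x}_j = 1-\varepsilon$ with nonnegative entries, and then rules out mass on indices with $a_j < \max_k a_k$ by an exchange argument (shifting an offending coordinate onto a maximal one strictly decreases the objective). You instead keep the original variables, obtain the optimal value from the single chain $1-\varepsilon \leq \sum_j a_j u_j \leq \sum_j a_j\abs{u_j} \leq (\max_k a_k)\norm[\ell^1]{u}$, and read off the entire solution set from the equality cases. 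Your route is arguably tighter: it delivers necessity and sufficiency of all three conditions ($\sum_j a_j u_j = 1-\varepsilon$, $u_j \geq 0$, and $u_j = 0$ for non-maximal $a_j$) in one pass, needs no convexity, and makes the reduction to real minimizers (zeroing all $\Im(x_j)$ preserves feasibility and strictly decreases $\norm[\ast]{\cdot}$ otherwise) fully explicit, including at $\varepsilon = 0$ --- a step the paper's version treats rather informally. The paper's exchange argument, on the other hand, is the one that transfers verbatim from the original $\ell^1$ setting it is imported from.
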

Finally, using \Cref{lm:pt2} the sequences establishing Banach-Mazur non-computability of any map $\Psi^s \in \Xi_{\text{bp}^\ast,m,N,\varepsilon}$ for $\varepsilon \in (0,1)$ can be constructed exactly as for \eqref{eq:sparseprob} in \cite[Proof of Theorem 4.1]{Boche2022LimitsDL} so that \Cref{thm:bpAstNonSolv} follows.


\section{Proof of \Cref{thm:BPapprox}}\label{sec:appb}
\noindent
In this section, we provide the proof of \Cref{thm:BPapprox}. First, we introduce some preliminary results starting with the Weierstrass approximation theorem. 
\begin{Theorem}\label{thm:Weierstrass}
    Let $f: [a,b] \to \R$ be a continuous function and $[a,b] \subset \R$ an interval. For every $\gamma >0$, there exists a polynomial $p_\gamma$ such that 
    \begin{equation*}
        \sup_{x\in [a,b]} \abs{f(x) - p_\gamma(x)} < \gamma.
    \end{equation*}
\end{Theorem}
\begin{Remark}\label{rm:Weierstrass}
    A constructive proof, see e.g. \cite{lorentz13bernstein}, of the Weierstrass approximation theorem can be established via Bernstein polynomials
    \begin{equation*}
        B_n(x,f) \coloneqq \sum_{k=0}^n f\Big(\frac{k}{n}\Big) \binom{n}{k} x^k (1-x)^{n-k} \text{ of degree } n \in \N
    \end{equation*}
    associated to a real-valued function $f \in C([0,1])$. In particular, for $\gamma>0$ we have for any $x\in[0,1]$ that
    \begin{equation}\label{eq:Bnapprox}
        \abs{B_n(x,f) - f(x)} \leq \gamma \quad \text{ if } n \geq \frac{\norm[\infty]{f}}{\delta^2 \gamma},
    \end{equation}
    where $\delta>0$ satisfies for all $x,y\in [0,1]$:
    \begin{equation*}
        \abs{x-y}\leq \delta \quad \implies \quad \abs{f(x) - f(y)} \leq \frac{\gamma}{2}.
    \end{equation*}
    Note that for any $\gamma>0$ there exists a corresponding $\delta>0$ due to the uniform continuity of $f$ on $[0,1]$. By considering the function $\Phi : [0,1] \to [a,b]$ given by
    \begin{equation*}
        \phi(x) = (b-a)x + a,
    \end{equation*}
    we can extend \eqref{eq:Bnapprox} to real-valued functions $g \in C([a,b])$ on arbitrary intervals $[a,b]$, i.e., for $\gamma>0$ we have for any $x\in[a,b]$ that
    \begin{equation*}
        \abs{B_n(\phi^{-1}(x),g \circ \phi) - g(x)} \leq \gamma \quad \text{ if } n \geq \frac{\norm[\infty]{g \circ \phi}}{\delta^2 \gamma},
    \end{equation*}    
    where $\delta>0$ satisfies for all $x,y\in [a,b]$:
    \begin{equation*}
        \abs{x-y}\leq \delta \quad \implies \quad \abs{(g \circ \phi)(x) - (g \circ \phi)(y)} \leq \frac{\gamma}{2}.
    \end{equation*}
\end{Remark}
Our goal is to approximate the objective of basis pursuit optimization \eqref{eq:sparseprob}, i.e., the $\ell_1$ norm, by a polynomial on a suitable domain $I \subset \C^N$. In particular, we want $I$ to contain the solutions of \eqref{eq:sparseprob} so that the optimization on the approximate objective ideally leads to a solution of the original problem, albeit this can not be guaranteed. After identifying $I$, we can invoke \Cref{thm:Weierstrass} and \Cref{rm:Weierstrass} to construct the sought polynomial. Hence, the first step is to find a domain satisfying our conditions. Via the pseudoinverse we are able to construct $I$ explicitly. In particular, we will use the following fact \cite{Stewart77PseudoInv}: For $A\in \C^{m\times N}$ and $y\in \C^m$ the minimal $\ell_2$ norm solution of
\begin{equation}\label{eq:PIsol}
    \argmin_{x\in \C^N} \norm[\ell_2]{Ax-y}
\end{equation}
is given by $A^\dagger y$, where $A^\dagger \in \C^{N\times m}$ denotes the pseudoinverse of $A$.
\begin{Lemma}\label{lm:PIint}
    Let $A\in \C^{m\times N}$, $y\in \C^m$ and $\varepsilon >0$ and consider
    \begin{equation}\label{eq:bpappend}
        \argmin_{x \in \C^N} \norm[\ell_1]{x} \text{ such that } \norm[\ell_2]{Ax -y} \leq \varepsilon.
    \end{equation}
    If the solution set is not empty, then the solution set is contained in $I=\{x\in\C^N: \norm[\ell_2]{x} < \sqrt{N} \norm[\ell_2]{A^\dagger y}\}$.    
\end{Lemma}
\begin{proof}
    Set $\hat{x}= A^\dagger y$. If $\norm[\ell_2]{A\hat{x}-y}> \varepsilon$, then \eqref{eq:PIsol} implies that there does not exist a minimizer of \eqref{eq:bpappend} for the given set of parameters. Hence, we assume $\norm[\ell_2]{A\hat{x}-y} \leq \varepsilon$ so that $\hat{x}$ satisfies the constraint of \eqref{eq:bpappend}. Note that $\norm[\ell_1]{x} \geq \norm[\ell_2]{x}$ and $\norm[\ell_1]{x} \leq  \sqrt{n} \norm[\ell_2]{x}$ for all $x \in \C^n$. Thus, $\norm[\ell_2]{z}> \sqrt{N} \norm[\ell_2]{\hat{x}}$ entails that
    \begin{equation*}
        \norm[\ell_1]{z} \geq \norm[\ell_2]{z} >  \sqrt{N} \norm[\ell_2]{\hat{x}} \geq \norm[\ell_1]{\hat{x}}
    \end{equation*}
    for $z\in \C^N$, i.e., $z$ is not a minimizer of \eqref{eq:bpappend}. Therefore, the optimization problem
    \begin{equation*}
        \argmin_{x \in \C^N} \norm[\ell_1]{x} \text{ such that } \norm[\ell_2]{Ax -y} \leq \varepsilon \,\,\wedge\,\, \norm[\ell_2]{x} < \sqrt{N} \norm[\ell_2]{\hat{x}}
    \end{equation*}    
    is equivalent to \eqref{eq:bpappend} and its minimizers are contained in $I$.
\end{proof}
Now, we turn to the approximation of the $\ell_1$ norm via polynomials.
\begin{Lemma}\label{lm:polAppr}
    Let $\beta>0$. For every $\gamma >0$, there exists a polynomial $p_{\beta,\gamma}$ such that 
    \begin{equation*}
        \sup_{x\in \C^n : \norm[\ell_2]{x} < \sqrt{n} \beta} \abs{\norm[\ell_1]{x} - p_{\beta,\gamma}(\Re(x), \Im(x))} < \gamma.
    \end{equation*}  
    The polynomial $p_{\beta,\gamma}$ can be explicitly expressed as
    \begin{equation}\label{eq:polBG}
        p_{\beta,\gamma}(\Re(x), \Im(x)) =  \frac{1}{(n\beta)^{2k-1}} \sum_{j=0}^k \sqrt{\frac{j}{k}}  \binom{k}{j} \sum_{i=1}^n \big(\Re(x_i)^2 + \Im(x_i)^2\big)^j \big(n^2\beta^2 -\Re(x_i)^2 - \Im(x_i)^2\big)^{k-j},
    \end{equation}
    where $ k \geq \frac{4n^8 \beta^5}{\gamma^5}$ is required.
\end{Lemma}
\begin{proof}
    Let $x\in \C^n$ such that $\norm[\ell_2]{x} < \sqrt{n} \beta$ holds. Then,
    \begin{equation*}
        \norm[\ell_1]{x} \leq \sqrt{n} \norm[\ell_2]{x} <  n \beta
    \end{equation*}
    so that
    \begin{equation*}
        n \beta > \norm[\ell_1]{x} = \sum_{i=1}^n \abs{x_i} = \sum_{i=1}^n \sqrt{\Re(x_i)^2 + \Im(x_i)^2}
    \end{equation*}
    and in particular
    \begin{equation}\label{eq:sqrtDom}
        n \beta > \sqrt{\Re(x_i)^2 + \Im(x_i)^2} \quad \text{ for all } i=1,\dots,n.
    \end{equation}
    Next, we want to invoke \Cref{thm:Weierstrass} and \Cref{rm:Weierstrass} to approximate $\sqrt{\cdot}$ on the interval $[0,n^2 \beta^2]$. To that end, let $g(x) = \sqrt{x}$ on $[0,n^2 \beta^2]$ and $\phi(x)= n^2 \beta^2 x$ on $[0,1]$. Observe that $(g\circ \phi)(x) = n \beta \sqrt{x}$ so that $\norm[\infty]{g \circ \phi} = n \beta$ and 
    \begin{equation*}
        \abs{n \beta \sqrt{x} - n \beta \sqrt{y}}^2 \leq n^2 \beta^2 \abs{\sqrt{x} - \sqrt{y}} \abs{\sqrt{x} + \sqrt{y}} =  n^2 \beta^2 \abs{x-y} \quad \text{ for all } x,y \in [0,1],
    \end{equation*}
    i.e., we obtain
    \begin{equation*}
        \abs{x-y} \leq \frac{1}{n^2 \beta^2} \Big(\frac{\gamma}{2n}\Big)^2 \eqqcolon \delta \implies \abs{(g \circ \phi)(x) - (g \circ \phi)(y)} \leq \frac{\gamma}{2n}.
    \end{equation*}
    Hence, applying \Cref{rm:Weierstrass} yields that for all $x,y \in [0,n^2 \beta^2]$
    \begin{equation*}
        \abs{B_k(\phi^{-1}(x),g \circ \phi) - g(x)} \leq \frac{\gamma}{n} \quad \text{ if } k \geq \frac{n \beta}{\delta^2 \frac{\gamma}{n}} =  \frac{4n^8 \beta^5}{\gamma^5}.
    \end{equation*}
    Thus, due to \eqref{eq:sqrtDom}
    \begin{align*}
        \sup_{x\in \C^n : \norm[\ell_2]{x} < \sqrt{n} \beta} &\abs{\norm[\ell_1]{x} - \sum_{i=1}^n B_k(\phi^{-1}(\Re(x_i)^2 + \Im(x_i)^2),g \circ \phi)}\\
        &=  \abs{\sum_{i=1}^n \sqrt{\Re(x_i)^2 + \Im(x_i)^2} - B_k(\phi^{-1}(\Re(x_i)^2 + \Im(x_i)^2),g \circ \phi)} \\
        &\leq \sum_{i=1}^n  \abs{ g(\Re(x_i)^2 + \Im(x_i)^2) - B_k(\phi^{-1}(\Re(x_i)^2 + \Im(x_i)^2),g \circ \phi)}\\
        &\leq n \cdot \frac{\gamma}{n} = \gamma \quad \text{ if } k \geq \frac{4n^8 \beta^5}{\gamma^5}
    \end{align*}
    and the sought polynomial $p_{\beta,\gamma}$ indeed exists via
    \begin{align*}
        p_{\beta,\gamma}(\Re(x)&, \Im(x)) =   \sum_{i=1}^n B_k(\phi^{-1}(\Re(x_i)^2 + \Im(x_i)^2),g \circ \phi)  \\
        &= \sum_{i=1}^n \sum_{j=0}^k (g \circ \phi)\Big(\frac{j}{k}\Big)  \binom{k}{j} \Big(\phi^{-1}\big(\Re(x_i)^2 + \Im(x_i)^2\big)\Big)^j \Big(1- \phi^{-1}\big(\Re(x_i)^2 + \Im(x_i)^2\big)\Big)^{k-j} \\
        &= \sum_{i=1}^n \sum_{j=0}^k n \beta \sqrt{\frac{j}{k}}  \binom{k}{j} \Big(\frac{1}{n^2\beta^2}(\Re(x_i)^2 + \Im(x_i)^2)\Big)^j \Big(1-\frac{1}{n^2\beta^2}(\Re(x_i)^2 + \Im(x_i)^2)\Big)^{k-j} \\
        &= \frac{1}{(n\beta)^{2k-1}} \sum_{j=0}^k \sqrt{\frac{j}{k}}  \binom{k}{j} \sum_{i=1}^n \big(\Re(x_i)^2 + \Im(x_i)^2\big)^j \big(n^2\beta^2 -\Re(x_i)^2 - \Im(x_i)^2\big)^{k-j}.
    \end{align*}
\end{proof}
So far, we have established the existence of an optimization problem approximating basis pursuit in the following sense: For given $A\in \C^N$, $y\in \C^m$ and $\varepsilon>0$   
\begin{itemize}
    \item \Cref{lm:PIint} describes a domain 
    \begin{equation}\label{eq:IAy}
        I_{A,y}\coloneqq \{x \in \C^N: \norm[\ell_2]{x} < \sqrt{N} \norm[\ell_2]{A^\dagger y}   \} 
    \end{equation}
    that contains the solutions of basis pursuit for $(A,y,\varepsilon)$,
    \item and \Cref{lm:polAppr} describes a polynomial $p_{\beta,\gamma}$ depending on parameters $\beta,\gamma>0$ that approximates the $\ell_1$ norm on a set 
    \begin{equation}\label{eq:Ibeta}
        I_\beta \coloneqq \{x \in \C^N: \norm[\ell_2]{x} < \sqrt{N} \beta  \}   
    \end{equation} 
    up to an error of $\gamma$.
\end{itemize}
Hence, an appropriate choice of $\beta = \norm[\ell_2]{A^\dagger y}$ enables to approximate the objective of basis pursuit up to an error of $\gamma$ on $I_{A,y}$. Note that the convergence of $p_{\beta,\gamma}$ to the $\ell_1$ norm for $\gamma \to 0$ does not imply that the minimizer(s) of the respective optimization problems also converge towards the original minimizer(s) of basis pursuit. Unfortunately, obtaining a guarantee of this form does not appear to be feasible in the general case. Thus, we have to be content with the convergence of the objectives. 

The remaining question is whether the adjusted optimization problem with additional parameters $\beta$ and $\gamma$ can be implemented and solved on BSS machines. We immediately observe that the function $p_{\beta,\gamma}$ given in \eqref{eq:polBG} can not be computed on a BSS machine. Indeed, $p_{\beta,\gamma}$ depends on coefficients encompassing values $(\sfrac{j}{k})^{\sfrac{1}{2}}$, where $j=0,\dots,k$ and $k\in \N$ needs to be chosen based on $\beta$ and $\gamma$. Thus, the non-computability of the square-root function prohibits the computation of the coefficients on a BSS machine. Therefore, the corresponding optimization problem can not be implemented on a BSS machine for adjustable parameters $\beta$ and $\gamma$. On the other hand, we will show next that in the restricted case with fixed $\beta$ and $\gamma$ the polynomial $p_{\beta,\gamma}$ can be implemented on a BSS machine. 
\begin{Lemma}\label{lm:polBSS}
    Fix $\beta, \gamma > 0$ and $k \geq \frac{4n^8 \beta^5}{\gamma^5}$. Then $p_{\beta,\gamma}$ as defined in \eqref{eq:polBG} is BSS-computable. 
\end{Lemma}
\begin{proof}
    Based on $k$ the finite set $\{\sqrt{1}, \dots, \sqrt{k}\} \cup \{\beta, \gamma\}$ can be encoded as constants in the memory of a BSS machine. Therefore, the coefficients of the polynomial $p_{\beta,\gamma}$ can be computed on a BSS machine so that $p_{\beta,\gamma}$ is a BSS-computable function as polynomials can be expressed through the concatenation of arithmetic operations.
\end{proof}
Finally, we prove that the optimization problem with objective $p_{\beta,\gamma}$ on $I_\beta$ can be solved on a BSS machine. Thereby, we conclude the proof of \Cref{thm:BPapprox}.
\begin{proof}[Proof of \Cref{thm:BPapprox}]
We first show that there exists a BSS-computable function $\Xi^s \in \mathcal{M}_{\Xi_{\text{p}(\beta,\gamma),m,N}}$ for fixed $\beta, \gamma >0$. We can manually compute $k\in \N$ such that $k \geq \frac{4N^8 \beta^5}{\gamma^5}$. Then, \Cref{lm:polBSS} implies that the polynomial $p_{\beta,\gamma}$ given in \eqref{eq:polBG} can be implemented on a BSS machine and \Cref{lm:polAppr} entails that \eqref{eq:approxl1} is satisfied. Additionally, observe that
\begin{equation*}
    \norm[\ell_2]{x} < \sqrt{N} \beta \, \Leftrightarrow \, \norm[\ell_2]{x}^2 < N \beta^2 \, \Leftrightarrow  \,\sum_{i=1}^n \abs{x_i}^2 - N \beta^2 < 0  \, \Leftrightarrow \, \sum_{i=1}^n \Re(x_i)^2 + \Im(x_i)^2 - N \beta^2 < 0   
\end{equation*}
is a semialgebraic set defined by a polynomial in $(\Re(x),\Im(x))$. Therefore \eqref{eq:approxbp} describes the optimization of the polynomial $p_{\beta,\gamma}$ on a semialgebraic set. Thus, the existence of $\Xi^s$ follows analogously to the proofs of \Cref{thm:BSSReal} and \Cref{thm:BSSComplex} via the optimization algorithm \eqref{alg:GlOp}.

It is left to show that there exists a BSS-computable function $g: \C^{m\times N} \times \C^m \times \R_{>0} \times \R_{>0}\to \{0,1\}$ so that $g(A,y, \varepsilon,\beta) = 1$ implies that the solutions of basis pursuit \eqref{eq:sparseprob} are contained in $I_\beta = \{x \in \C^N: \norm[\ell_2]{x} < \sqrt{N} \beta\}$. We will explicitly construct $g$. First, note that the function $g_{\text{pi}}: \C^{m\times N} \to \C^{N\times m}$ mapping $A$ on its pseudoinverse $g_{\text{pi}}(A)=A^\dagger$ is a BSS-computable function. This follows immediately from the fact that the pseudoinverse can be computed via multiple applications of the Gaussian elimination algorithm and exchange of rows of matrices based on identifying zero-rows \cite{Sheng2010CompMP}. These operations can all be implemented on BSS machines since they depend only on basic arithmetic operations and comparisons to zero. Hence, the computability of $g_{\text{pi}}$ follows. 

Next, define the set $S \coloneqq \{(A,y,\varepsilon) \in \C^{m\times N} \times \C^m \times \R_{>0} : g_{\text{sol}}(A,y,\varepsilon) < 0 \}$, where $g_{\text{sol}}: \C^{m\times N} \times \C^m \times \R_{>0}$ is given by
\begin{equation*}
    g_{\text{sol}}(A,y,\varepsilon) = \norm[\ell_2]{Ag_{\text{pi}}(A) y - y}^2 - \varepsilon^2.
\end{equation*}
We can immediately conclude that $g_{\text{sol}}$ is BSS-computable since $g_{\text{pi}}$ is BSS-computable and $g_{\text{sol}}$ is a polynomial in $(A,y,\varepsilon)$. Hence, $S$ is decidable on a BSS machine. Recall that $\norm[\ell_2]{Ag_{\text{pi}}(A) y - y}^2 - \varepsilon^2 > 0$ implies that the associated basis pursuit optimization does not have any solution; compare proof of \Cref{lm:PIint}. Consequently, the existence of minimizers of basis pursuit can be decided on BSS machines.  

Finally, given that minimizer(s) for basis pursuit exist, we need to check whether they are contained in $I_\beta$. Due to \eqref{eq:IAy} and \eqref{eq:Ibeta} it suffices to decide the set $V \coloneqq \{(A,y,\beta) \in \C^{m\times N} \times \C^m \times \R_{>0}: g_{\text{dec}}(A,y,\beta) \leq 0 \}$, where $g_{\text{dec}}: \C^{m\times N} \times \C^m \times \R_{>0}$ is given by
\begin{equation*}
    g_{\text{dec}}(A,y,\beta) = \norm[\ell_2]{g_{\text{pi}}(A) y}^2 - \beta^2.
\end{equation*}
Again the BSS-computability of $g_{\text{dec}}$ follows directly from the BSS-computability of $g_{\text{pi}}$ so that $V$ is decidable on a BSS machine. Therefore, setting 
\begin{equation*}
    g(A,y,\varepsilon,\beta) = \chi_S(A,y,\varepsilon) \chi_V(A,y,\beta)
\end{equation*}
gives that $g$ is BSS-computable as the product of two BSS-computable functions. Moreover, \eqref{eq:IAy} shows that $g(A,y,\varepsilon,\beta) = 1$ if and only if the solutions of basis pursuit for $(A,y,\varepsilon)$ are contained in $I_\beta$. Finally, the BSS-computability of $g$ remains valid when considering real BSS machines representing $\C$ via the real and imaginary parts. 
 
\end{proof}
\begin{Remark}
    In the Turing model the outlined approach to approximate basis pursuit is not feasible. In particular, the step involving the pseudoinverse of the input matrix is not transferable since the mapping of a matrix on its pseudoinverse is not computable on a Turing machine \cite{Boche2022PseudoInverse}.  
\end{Remark}
\end{document}